\documentclass[submission,copyright,creativecommons]{eptcs}

\usepackage[english]{babel}
\usepackage{etex}
\usepackage{amsmath}%
\usepackage[amsmath,amsthm,thmmarks]{ntheorem}
\usepackage{amsfonts}%
\usepackage{graphicx}
\usepackage{babel}
\usepackage[normalem]{ulem}
\usepackage[usenames,dvipsnames]{pstricks}
\usepackage{epsfig}
\usepackage{pst-grad} 
\usepackage{pst-plot}
\usepackage{caption}
\usepackage{subcaption}
\usepackage{amssymb}
\usepackage{dsfont}
\usepackage{color}
\usepackage{multicol}
\usepackage[toc,page]{appendix}
\usepackage{times}

\makeatletter

\newcommand{\Rmnum}[1]{\expandafter\@slowromancap\romannumeral #1@}

\newcommand{\expectation}[1]{\mathds{E}\left[ {#1} \right] }

\newcommand{\probmath}[1]{\mathds{Pr}\left[ {#1} \right] }

\newcommand{\diam}[1]{d_{#1}}
\newcommand{\ds}{\mathcal{D}}
\newcommand{\dsopt}{\mathcal{D}_{opt}}

\newcommand{\bo}[0]{\mathcal{O}}
\newcommand{\gs}{Multi\_Broadcast}
\newcommand{\br}{Broadcast}
\newcommand{\maxdegree}{\Delta}
\newcommand{\maxpower}{\Phi}
\makeatother

\newcommand{\boldsubsection}[1]{%
  \par\addvspace{\baselineskip}
  \noindent\textbf{#1}\enspace\ignorespaces}


\newtheorem{theorem}{Theorem}

\newtheorem{claim}[theorem]{Claim}

\newtheorem{lemma}[theorem]{Lemma}

\usepackage[lined,boxed,commentsnumbered,linesnumbered,ruled]{algorithm2e}



 \ifx\proof\undefined\
   \newenvironment{proof}[1][\proofname]{\par
     \normalfont\topsep6\p@\@plus6\p@\relax
     \trivlist
     \itemindent\parindent
     \item[\hskip\labelsep
           \scshape
       #1]\ignorespaces
   }{%
     \endtrivlist\@endpefalse
   }
   \providecommand{\proofname}{Proof}
 \fi


\newcounter{mycount}
\newcommand\myprob[3]
{
  \stepcounter{mycount}
  \par\noindent
 {\bfseries Input:} #1\\
 {\bfseries Output:} #2 \\
}

\newcounter{mycount1}
\newcommand\probFormulation[2]
{
  \stepcounter{mycount1}
  \par\noindent
 {\bfseries Instance:} #1\\
 {\bfseries Question:} #2
}

\title{Message and time efficient multi-broadcast schemes}
\author{Liron Levin
\institute{Communication Systems Engineering Department, \\ Ben-Gurion University of the Negev, Israel\\}
\and
Dariusz R. Kowalski
\institute{Department of Computer Science, \\ University of Liverpool, UK}
\and
Michael Segal
\institute{Communication Systems Engineering Department, \\ Ben-Gurion University of the Negev, Israel\\}
}

\begin{document}
\maketitle
\begin{abstract}
We consider message and time efficient broadcasting and multi-broadcasting in wireless ad-hoc networks,
where a subset of nodes, each with a unique rumor, wish to broadcast their rumors to all destinations while minimizing the
total number of transmissions and total time until all rumors arrive to their destination.
Under centralized settings, we introduce a novel approximation algorithm that provides almost optimal results with respect to the number of transmissions and total time, separately.
Later on, we show how to efficiently implement this algorithm under distributed settings, where the nodes
have only local information about their surroundings.
In addition, we show multiple approximation techniques based on the network collision detection capabilities and explain how
to calibrate the algorithms' parameters to produce optimal results for time and messages.
\end{abstract} 
\section{Introduction}
Data \emph{broadcasting}, where a rumor from a single source has to be delivered to all other nodes in the graph, is considered one of the most studied problems in wireless ad-hoc networks \cite{BroadcastSurvey}.
In this paper, we study a generalized version called the \emph{multi-broadcast} problem~\cite{multibroadcast3}, where instead of a single source, a subset of sources $S \subseteq V$, each with a different rumor, have to deliver their rumors to {\itshape all} other nodes in the network.
When $S$ contains only a single node, the problem reduces to data \emph{broadcasting} problem, and when $S$ contains all the nodes, it reduces to data \emph{gossiping} problem~\cite{Gossip3}.

We use the partial aggregation model, also known as the \emph{combined message model}~\cite{Gossip1,combinedmodel2}, where a node can aggregate multiple messages to one by stripping message headers, using compression or correlating data from other nodes~\cite{funnling1}.
Formally, we use the compression factor $c$, which serve as an upper bound for the number of messages that can be
compressed to a single batch; note that a message can only be compressed once.
In this paper, we develop generalized algorithms which hold for any subset $S \subseteq V$ and and positive integer $c \in [1,k]$,
and thus suitable for both broadcasting and gossiping with and without aggregation (i.e., $c=1$).

In data dissemination, there are two important performance metrics that directly affect the quality of the algorithm: \emph{time} efficiency, measured by the total time until all nodes receive all rumors,
and \emph{message} efficiency, assessed by the total number of messages that are transmitted in the network.
Most papers on data broadcasting and gathering concentrate on optimizing the time metric \cite{Gossip1,Gossip2,Gossip3} and only provide by-product analysis of the message metric without exact performance guarantees.
However, In ad-hoc networks, where the nodes have limited battery and the cost of sending a message is directly proportional
to the lifetime of a node~\cite{MaximumLifetime}, minimizing the number of messages 
is a key aspect in the overall efficiency of the solution.
In this work, we concentrate on finding both message and time efficient algorithms
for broadcasting problem and for the more general multi-broadcasting problem, with and without aggregation.
We separate our analysis to two types of network settings: \emph{centralized} and \emph{distributed}.
In the \emph{centralized} network setting~\cite{Gossip2},
we assume that each node has full knowledge about the topology of the network, including
size, distance, and the ids of all nodes.
In the \emph{distributed} network settings~\cite{DistributedBroadcast1, DistributedBroadcast2, DistributedBroadcast3}, we assume that each node has only partial information about the network; for example, the number of neighbors it has or the total number of nodes.
\boldsubsection{Our results.}
For centralized network setting
we show a direct relation between messages efficiency and the size of the underlying \emph{backbone} topology,
on which rumors propagate to their destination,
and show how to build a backbone such that the number of message transmitted is small.
To handle time efficiency, we show how to shorten the diameter of the obtained backbone, which decrease the total time of the scheduling algorithm and ensures all rumors arrive to their destination as soon as possible. 
Our construction has minor impact on the message efficiency. 
Our results improves previous approximation ratio by Kim et al. ~\cite{BoundedDiameterCDS}.
For the distributed network settings, we first show how to construct the backbone on which rumors will propagate. 
Next, we show a message and time efficient technique for transmitting messages using the constructed backbone structure.
The technique enables calibrating the performance of the algorithm based on time or message
requirements.
The novelty of our approach is by comparing the quality of the proposed algorithms under each of the criteria, separately.
In addition, as a by-product of our work, we present an algorithm for building a connected dominating set with short diameter.

The rest of the paper is organized as follows: in Section~\ref{sec:model} we present the model of the network and formulate the multi-broadcast optimization problem.
Summary of related work is presented in Section~\ref{sec:previouswork}.
We provide approximation algorithms for efficient message and time
broadcast and multi-broadcast under centralized setting in Section~\ref{sec:centralized},
and extend this work for distributed setting in Section~\ref{sec:distributed}.
Our conclusions and future work are summarized in Section~\ref{sec:conclusions}.
\section{Model and Problem Formulation}\label{sec:model}
Ad-hoc wireless networks consist of a set of $n$ mobile units, also called {\em nodes},
distributed in a two dimensional plane and equipped with radio transmitters and receivers. 
The power required to transmit a message from a node to distance $r$ is $P = r^{\alpha}$, where $r \in [1, \maxpower]$ is the transmission radius for some
physical system parameter $\maxpower>1$ and $\alpha \in [2,4]$ is the path loss exponent \cite{CommPhysics}. 
Our model is made more realistic by the incorporation of physical obstacles in the network, which represent buildings, trees or other objects that block message transmissions \cite{ObstacleAdHoc}.
The transmission power of each node $P(u)$ is pre-configured, and cannot change during the course of the algorithm,
and a directed edge between two nodes $u$ and $v$ is formed if there is no physical obstacle and if the Euclidean distance between them, $d(u,v)$
is less than $\sqrt[\alpha]{P_{u}}$.
In addition, we also consider the special case of Unit Disc Graphs (UDG), where $P(u)$ is equal for all nodes.

Let $k$ be the number of different rumors in the network.
In our model, the cost of sending a rumor from a node to its neighbors is fixed, but up to $c$ rumors,
$1 \leq c \leq k$, can be compressed to a single message, which we refer to as \emph{batch}.
Note that messages inside each batch can be rearranged in intermediate nodes
but multiple batches cannot be further compressed.
We consider the following parameters of the network graph $G$:
its diameter, $\diam{G}$, the degree of each node $\delta(v)_G$, its maximum degree $\maxdegree_G$ 
and $h_G(u,v)$, the shortest number of hops needed to route a message from $u$ to $v$ in $G$;
subscript $G$ is removed when it is clear from the context.

In this paper, we study the $\gs$ problem, which is defined as follows:
\myprob
{Graph $G=(V,E)$, set $S$ of $k$ source nodes each with one rumor, and compression parameter $c\le k$.}
{Multi-broadcast schedule from all nodes in $S$ to all nodes in $V$.}

For abbreviation we use $\br$ when $k = 1$ and $\gs$ otherwise.
Note that in some related work~\cite{Gossip1,Gossip2}, when $k = n$ the problem is refereed to as gossiping.

We are looking for a solution to the problem under the following optimization criterion:
\par\noindent
{\bfseries Message Efficiency:}
The objective here is to minimize the number of messages transmitted in the network in the course of the algorithm.
When analyzing {\itshape only} the message efficiency criteria, we do not take interferences into consideration, assuming that all messages can be scheduled by some interference-free protocol without
increasing the number of messages sent
(e.g., we can partition time into $n$ time slots, and let node $i$ transmit in time slot numbered $t = i \mod n$). 
This assumption is removed when additional optimization criterion are considered.
We define $m_{opt}$ as the minimum number of messages that are transmitted in the network
during the execution of the optimal solution.\\
\par\noindent
{\bfseries Time Efficiency:} 
The objective here is to minimize the time it takes until all rumors are received by all nodes.
When analyzing time efficiency, we adopt the \emph{protocol interference model}~\cite{PhysicalModelAndSinr},
where a communication between nodes $u$ and $v$ is successful if no neighbor of $v$ (the receiver) is simultaneously transmitting.
For any subgraph $T \subseteq G$, let $I_p(u,T)$ be the \emph{conflict set} of $u$ in $T$,
which consists of nodes that cannot be scheduled to transmit simultaneously with $u$ because they interfere
to $u$'s recipients.
Note that since we use omni-directional antennas we have $I_p(v,T) = \maxdegree_T (\maxdegree_T - 1)$.
We define $s_{opt}$ as the minimum time required to
deliver all rumors to their destinations. \\
\par\noindent

When analyzing the efficiency of an algorithm, the performance is compared to the optimal solution under each specified criteria, i.e., in time and message efficient $\gs$, we ask to find a schedule that uses at most $\alpha m_{opt}$ messages, and takes at most $\beta s_{opt}$ time, for some parameters $\alpha, \beta \geq 1$.
Approximation algorithms are used since time efficient gossip is NP-hard \cite{NpHardScheduling-GeneralGraph,NpHardScheduling-Udg} and, as we show in Appendix \ref{sec:NP-Hard}, message efficient gossip is also NP-hard.

To efficiently solve $\br$ under both centralized and distributed settings,
we also show how to construct an underlying graph on which rumors are routed from the source to the entire graph;
we refer to this graph as the network \emph{backbone}. 
\section{Previous Work}\label{sec:previouswork}
The problem of message and time efficient broadcast, multi-broadcast, and gossip $(k=n)$ has been studied in multiple research papers.
For centralized setting, Clementi et al.~\cite{Gossip1} studied gossip with
the existence of faulted  links in the networks.
They proposed an algorithm with time efficiency $\bo(n \maxdegree)$ and message efficiency $\bo(n^2)$
without compression, and time efficiency $\bo(\diam{G} \maxdegree)$ and message efficiency $\bo(\diam{G} n)$
with maximum compression, i.e., $c = k$.
The model was extended to include radio interference by \cite{Gossip2},
where Gasieniec et al. showed how to construct a data gathering tree for fast broadcast and gossip.
For broadcast, they proposed two algorithms, deterministic and randomized, with time efficiency
$\diam{G} + \bo(\log{n}^3)$ and
$\diam{G} + \bo(\log{n}^2)$, respectively.
It was later asymptotically improved by Kowalski and Pelc \cite{Kowalski:2007:ODB:1553848.1553850}, who provided a polynomial time deterministic algorithm computing broadcast protocol with time efficienct $\bo(\diam{G} + \log{n^2})$.
For gossip, Gasieniec et al. \cite{Gossip2} also showed an algorithm with time efficiency $\diam{G} + \maxdegree \cdot n + \bo(\log{n}^3)$.
If the maximum degree is bounded by $\Omega(\log{n})$, Cicalese et al.
\cite{FasterCentralizedCommunication} improved the result to
$\diam{G} + \bo(\frac{\log{n}^{3}}{\log{\log{n}}})$ time efficient broadcast and
$\bo(\diam{G} + \frac{\maxdegree \log{n}}{\log{\maxdegree} - \log{log{n}}})$ time efficient gossip.
In addition, they showed that their result is almost tight by constructing a $\maxdegree$-regular tree in which the time efficiency of gossip is at least $\Omega(\diam{G} + \frac{\maxdegree \log{n}}{\log{\maxdegree}})$.

For distributed setting, where the only information a node has is its coordinates,
Emek et al. \cite{DistributedBroadcast3} studied two initialization model in Unit Disk Graphs:
conditional wake up, when all stations other than the source are initially idle, and spontaneous
wake up, where all stations are initially awake.
They proposed a $\bo(\diam{G}g)$ and a $\max(\bo(\diam{G} + g^{2}),\diam{G} \log{g})$ time efficient algorithms,
where $g$ is the inverse of the minimum distance between any two nodes.
In addition, they showed that the lower bound for time efficiency for any deterministic algorithm is $\bo(\diam{G} \sqrt{g})$.
In a follow-up work \cite{Emek:2008:EDS:1400751.1400782} the authors showed that for grid networks,
although the lower bound for broadcast stays $\bo(\diam{G} \sqrt{g})$,
there is a faster algorithm with time efficiency $\bo(\diam{G} g^{\frac{5}{6}} \log{g})$.
For multi-broadcast, Chlebus et al. \cite{multibroadcast3} showed a distributing algorithm
that constructs a tree using an innovative breadth-then-depth traversal.
Their algorithm has time efficiency $\bo(k \log{n}^{3} + n \log{n}^{4})$
even for compression $c=1$.
Another interesting model was proposed by Chlebus et al. \cite{manytomany}, where only
sources exchange their messages.
Their algorithm has time efficiency $\bo(\max_{u,v \in s}d(u,v) + k + \log{n}^{2})$
for distributed settings with full message compression ($c=k$).
Both papers assumed that the network is directed.

Although its importance to network long-ability and overall energy consumption,
there have been limited research that emphasizes the importance of message efficiency in data routing.
In \cite{chelius:inria-00070343}, the authors have shown that multi-broadcasting problem is NP-hard
when trying to minimize time or message efficiency.
They did not provide NP-hard results for general compression ratio and did not present any algorithm for the problem.
The problem was also studied by Berenbrink et al.~\cite{MessageEfficientBroadcast},
where the gossip and broadcast in random and general networks were studied.
For random networks, a gossip algorithm where each node transmits at most $\bo(\log{n})$ messages was presented, and for general networks, a broadcast algorithm where each node transmits at most
$\bo(\frac{\log{n}^{2}}{\log{\frac{n}{\diam{G}}}})$ messages was given.

The algorithms in this paper use a modified version of the connected dominating set as a compact backbone for routing messages.
The minimum connected dominating set is NP-hard~\cite{NPHardGuide}, and approximable within the factor of $2+ H(\maxdegree)$, where $H(i)$ is the $i$-th harmonic number, for general graphs \cite{MCDSGeneral} and the factor of $7.8$ for Unit Disc Graphs~\cite{MCDSUDG}.
For ad-hoc networks, Kowalski and Jurdzinski \cite{wdagJurdzinskiK12} demonstrated how to construct a backbone (CDS) in SINR model with application to multi-broadcast. They also showed a construction in sublinear time for radio networks~\cite{opodisJurdzinskiK12}.
To the best of our knowledge, the only paper that made a connection between connected dominating set and message efficient gossiping
was \cite{Harary199715}, where Harary et al. demonstrated that the email gossip number,
which is the minimum number of messages required for gossiping messages from $n$ sources, is $n - 1 + |\dsopt|$, where
$\dsopt$ is the optimal minimum connected dominating set and $|\dsopt|$ is its size.
Another interesting sub-problem that we investigate is finding a small connected dominating set with short diameter.
For Unit Disk Graphs, Kim et al. \cite{BoundedDiameterCDS} found a connected dominating set with size $11.4 |\dsopt| + 6$ and with diameter $3 \diam{G} + 7$.
A related model was proposed by Du et al. \cite{virtualbackbone}, where the authors showed how
to construct a CDS such that given a parameter $\alpha$ the distance between two nodes $u$ and $v$ is at most  $\alpha \cdot d(u,v)$.
Their construction provides $H(\frac{\maxdegree (\maxdegree - 1)}{2})$-approximation to the size of the solution for all graphs $G^{'} \subseteq G$, such that $d_G^{'}(u,v) = d_G(u,v)$. 
\section{Multi-Broadcast under centralized setting}\label{sec:centralized}
In this section, we show bi-criteria approximation algorithms for message and time efficient $\br$ and $\gs$,
i.e., our algorithms find a solution having cost within a factor of $\alpha$ from the optimal solution with minimum number of messages and within a factor of $\beta$ from the optimal solution that needs minimum time to distribute all rumors.
The proposed algorithms are designed for centralized networks, where each node has knowledge about the entire network topology.

\subsection{Approximation algorithm for message efficient $\br$}

In this subsection, we solve the message efficient
$\br$ problem. We show a relation between the number of messages needed for $\br$
and the minimum connected dominating set of the graph, and provide a constructive
algorithm that uses this fact for broadcasting.

\begin{algorithm}[H]

Find a connected dominating set $\ds$ in $G$.\\

Transmit the rumor from $s$ to all nodes using $\ds$ as a backbone.
\caption{Message efficient $\br$ \label{alg:broadcast}}
\end{algorithm}
\

Let $m_{opt}$ be the minimum number of messages required to complete broadcast, and $\dsopt$ be the size of the optimal dominating set.
Assume we have an $\alpha$-approximation algorithm for finding $\ds$ in Algorithm \ref{alg:broadcast}.
We claim the following:

\begin{lemma}\label{lemma:approx1}
Algorithm \ref{alg:broadcast} uses $\alpha m_{opt} + 1$ messages to broadcast the rumor from $s$.
\end{lemma}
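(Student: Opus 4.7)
The plan is to bound the number of transmissions used by Algorithm~\ref{alg:broadcast} from above in terms of $|\ds|$, and to lower bound $m_{opt}$ by $|\dsopt|$. Combined with the assumed approximation guarantee $|\ds| \le \alpha |\dsopt|$, this will yield the stated bound.

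First, I would analyze the cost of Algorithm~\ref{alg:broadcast}. The source $s$ transmits the rumor once to inject it into the backbone, and each node of $\ds$ then transmits it exactly once while relaying. Because $\ds$ is a connected dominating set, the rumor can be propagated throughout $\ds$ starting from a neighbor of $s$ in $\ds$, and every node in $V \setminus \ds$ has a neighbor in $\ds$ from which it receives the rumor during one of those relays. Hence Algorithm~\ref{alg:broadcast} uses at most $|\ds| + 1$ messages (the ``$+1$'' accounting for $s$'s initial transmission when $s \notin \ds$; if $s \in \ds$ then only $|\ds|$ messages are needed).

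Next, I would establish $m_{opt} \ge |\dsopt|$. Let $T$ be the set of nodes that transmit at least once in an optimal broadcast protocol achieving $m_{opt}$. I claim $T$ is a connected dominating set of $G$. For domination: every $v \in V \setminus T$ must still receive the rumor, so it has a neighbor in $T$ (the node from which it received). For connectivity: every transmitter $u \ne s$ must itself have received the rumor before transmitting, and so from another node in $T$; tracing this ``parent'' relation back yields a path inside $T$ from $u$ to $s$. Thus $T$ is a CDS and $|T| \ge |\dsopt|$, and since every element of $T$ contributes at least one transmission, $m_{opt} \ge |T| \ge |\dsopt|$. Combining the bounds gives at most $|\ds| + 1 \le \alpha |\dsopt| + 1 \le \alpha m_{opt} + 1$ messages, as claimed.

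The main subtlety will be justifying that the set of transmitters in an optimal broadcast really does induce a connected dominating subgraph in the (potentially directed) graph model used in this paper, and handling the boundary case of whether $s$ already belongs to $\ds$. Both details are essentially bookkeeping, but they are the only places where the proof could slip, since the rest is a direct chaining of the upper and lower bounds through the CDS approximation ratio.
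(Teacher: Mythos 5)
Your proof is correct and takes essentially the same route as the paper: both bound the algorithm's cost by $|\ds|+1 \le \alpha|\dsopt|+1$ and lower-bound $m_{opt} \ge |\dsopt|$ by observing that the transmitting nodes of an optimal broadcast (the internal nodes of the propagation subgraph, in the paper's phrasing) form a connected dominating set. Your explicit parent-tracing argument for connectivity and the $s \in \ds$ versus $s \notin \ds$ case split are just more detailed renderings of steps the paper states tersely.
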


\begin{proof}
Assume we have a solution for message efficient $\br$, and let $T$ be the connected subgraph
along which the source message is propagated from $s$ to all other nodes.
Let $l$ be the number of leaves in $T$, and $d$ be the number of internal nodes.
By definition, internal nodes in $T$ represent a connected dominating set in $G$.
In order to propagate the message to the entire tree, all internal nodes must transmit at least one message;
otherwise, we could transform the non-transmitting node to a leaf.
Thus, the total number of messages transmitted is at least:
\[
m = \left\{ \begin{array}{ll}
         |\ds| + 1 & \mbox{if s is a leaf} \\
         |\ds| & \mbox{otherwise}.\end{array}
		  \right.
\]
This is illustrated in Figure \ref{fig:mcdsToGossip}.
\begin{figure}
\center
\begin{subfigure}[b]{0.42\textwidth}
\scalebox{1} 
{
\begin{pspicture}(0,-1.7)(4.42,1.7)
\definecolor{color185}{rgb}{0.2,0.2,1.0}
\psellipse[linewidth=0.04,dimen=outer](0.51,-0.01)(0.51,0.49)
\psellipse[linewidth=0.04,linecolor=color185,linestyle=dashed,dash=0.16cm 0.16cm,dimen=outer](2.27,1.21)(0.51,0.49)
\psellipse[linewidth=0.04,dimen=outer](3.91,0.01)(0.51,0.49)
\usefont{T1}{ptm}{m}{n}
\rput(0.47140625,-0.03){$s$}
\psline[linewidth=0.04cm](2.22,0.7)(2.22,-0.7)
\psline[linewidth=0.04cm](2.6,-1.06)(3.56,-0.36)
\psline[linewidth=0.04cm](0.74,0.4)(1.8,1.06)
\psellipse[linewidth=0.04,linecolor=color185,linestyle=dashed,dash=0.16cm 0.16cm,dimen=outer](2.19,-1.21)(0.51,0.49)
\end{pspicture}
}

\caption{$s$ is not part of MCDS, $|\dsopt| + 1 = m_{opt}$}
\label{fig:sampleWithoutSource}
\end{subfigure}%
\qquad \qquad \qquad
\begin{subfigure}[b]{0.40\textwidth}
\scalebox{1} 
{
\begin{pspicture}(0,-1.7)(4.42,1.7)
\definecolor{color185}{rgb}{0.2,0.2,1.0}
\psellipse[linewidth=0.04,dimen=outer](0.51,-0.01)(0.51,0.49)
\psellipse[linewidth=0.04,linecolor=color185,linestyle=dashed,dash=0.16cm 0.16cm,dimen=outer](2.27,1.21)(0.51,0.49)
\psellipse[linewidth=0.04,dimen=outer](3.91,0.01)(0.51,0.49)
\usefont{T1}{ptm}{m}{n}
\rput(2.1514063,-1.21){$s$}
\psline[linewidth=0.04cm](2.22,0.7)(2.22,-0.7)
\psline[linewidth=0.04cm](2.6,-1.06)(3.56,-0.36)
\psline[linewidth=0.04cm](0.74,0.4)(1.8,1.06)
\psellipse[linewidth=0.04,linecolor=color185,linestyle=dashed,dash=0.16cm 0.16cm,dimen=outer](2.19,-1.21)(0.51,0.49)
\end{pspicture}
}
\caption{$s$ is part of MCDS, $|\dsopt| = m_{opt}$}
\label{fig:sampleWithSource}
\end{subfigure}%
\caption{Similarity between MCDS and the optimal broadcast backbone.
The value of $m_{opt}$ denotes the size of optimal broadcast backbone, $|\dsopt|$ denotes the size of optimal minimum connected
dominating set and $s$ is the source.}
\label{fig:mcdsToGossip}					
\end{figure}
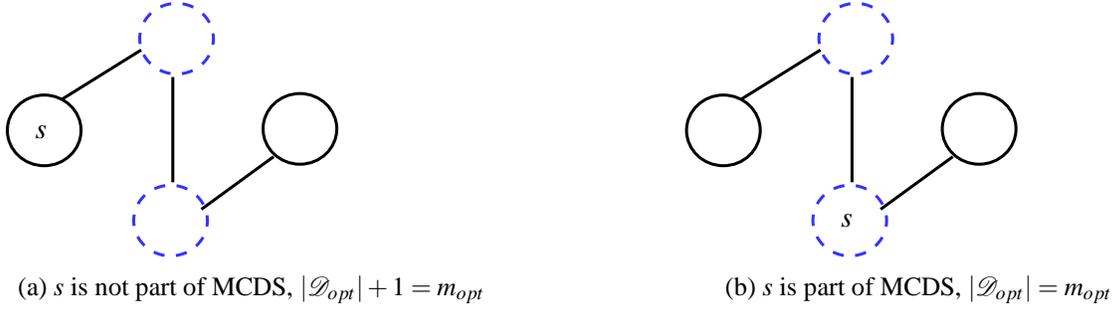 
Clearly, we have $\dsopt \leq m_{opt}$. 
Therefore for any approximation algorithm, we
have $\alpha \cdot \dsopt \leq \alpha \cdot m_{opt}$.
Thus, we can use MCDS approximation algorithm, attach the source node
$s$ if it is not a part of the dominating set, and get a backbone on which we propagate the broadcast message.
The number of messages required is identical to the number of nodes in the solution and is at most $\alpha \cdot \dsopt + 1$.
\end{proof}

Combing this with the best known MCDS approximation algorithm yields the following lemma:
\begin{lemma}
Algorithm \ref{alg:broadcast} solves $\br$ and uses at most
$(H(\maxdegree) + 2) m_{opt} +  1$ messages for general graph and $7.8 m_{opt} + 1$ messages for UDG.
\end{lemma}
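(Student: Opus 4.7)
The plan is to treat this lemma as a direct corollary of the previous Lemma (on $\alpha m_{opt}+1$ messages) combined with the state-of-the-art approximation ratios for the Minimum Connected Dominating Set (MCDS) problem. The previous lemma is parametric in the approximation factor $\alpha$ of whatever MCDS routine is plugged into line~1 of Algorithm~\ref{alg:broadcast}; therefore all that remains is to instantiate $\alpha$ appropriately for general graphs and for UDG, and observe that the output of Algorithm~\ref{alg:broadcast} is indeed a valid broadcast schedule (each internal node of the backbone transmits once, and the dominating property guarantees every vertex receives the rumor).

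More concretely, I would proceed in three short steps. First, recall from Lemma~\ref{lemma:approx1} that if the subroutine in line~1 returns a connected dominating set of size at most $\alpha\,|\dsopt|$, then Algorithm~\ref{alg:broadcast} transmits at most $\alpha\,|\dsopt|+1$ messages (the $+1$ accounting for the source when it is not already in the CDS). Second, invoke the inequality $|\dsopt|\le m_{opt}$, established in the proof of Lemma~\ref{lemma:approx1} from the fact that the internal nodes of any broadcast tree form a CDS, to upgrade this bound to $\alpha\,m_{opt}+1$. Third, instantiate $\alpha$ using the best known MCDS approximations: the greedy algorithm of~\cite{MCDSGeneral} yields $\alpha = H(\maxdegree)+2$ on general graphs, and the algorithm of~\cite{MCDSUDG} yields $\alpha = 7.8$ on Unit Disk Graphs. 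Substituting gives the two claimed bounds.

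The correctness half (that the scheme is a legal broadcast) is essentially by construction: a connected dominating set together with the source forms a connected subgraph that spans a dominating set of $V$, so transmitting the rumor once from every node of this augmented backbone reaches all of $V$; since we ignore interference in the message-only criterion, any ordering that respects reachability from $s$ suffices. There is no real obstacle in the argument — the only thing to be slightly careful about is the bookkeeping of the additive $+1$: one must check that attaching $s$ to the CDS (when $s \notin \ds$) adds exactly one transmitting node, and that when $s \in \ds$ the bound $\alpha\,m_{opt}+1$ still holds trivially because $\alpha\,|\dsopt| \le \alpha\,m_{opt}$. Modulo this small case analysis, the lemma follows by direct substitution.
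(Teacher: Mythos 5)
Your proposal is correct and matches the paper's own (implicit) argument exactly: the paper derives this lemma by combining Lemma~\ref{lemma:approx1} with the known MCDS approximation ratios, namely $2+H(\maxdegree)$ for general graphs from \cite{MCDSGeneral} and $7.8$ for UDG from \cite{MCDSUDG}, which is precisely your instantiation of $\alpha$. Your extra bookkeeping on the additive $+1$ and the correctness of the broadcast backbone is already contained in the proof of Lemma~\ref{lemma:approx1}, so nothing further is needed.
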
 
\subsection{Approximation algorithm for message efficient $\gs$}\label{Sec:GossipApprox}
In this subsection, we extend the result for message efficient $\br$ to $\gs$.
Note that we neglect the effect of interference since it is not part of the optimization criteria.
Let $|\dsopt|$ be the size of the MCDS in $G$, $m_{opt}$ be the minimum number of messages required to distribute all $k$ rumors in $\gs$, and
\[
|\dsopt^{s}| = \left\{ \begin{array}{ll}
         |\dsopt| - 1 & \mbox{if $s$ is a leaf} \\
         |\dsopt| & \mbox{otherwise.}\end{array}
		  \right.
\]

We claim the following:
\begin{lemma}
\label{lemma:lowerBound}
$\frac{k (|\dsopt|-1)}{c} \leq m_{opt}$.
\end{lemma}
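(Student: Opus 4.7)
The plan is to lower-bound $m_{opt}$ by studying, separately for each rumor, the set of nodes that ever transmit it. Fix any optimal multi-broadcast schedule using $m_{opt}$ transmissions. For each rumor $i \in \{1,\dots,k\}$, let $T_i$ be the number of transmissions whose batch contains rumor $i$, and let $X_i \subseteq V$ be the set of nodes that send at least one batch containing rumor $i$; note $s_i \in X_i$ since the source must transmit its own rumor at least once.

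First I would show that $X_i$ is a connected dominating set of $G$. Domination is immediate: any $v \notin X_i$ must eventually learn rumor $i$, and since $v$ itself never transmits it, $v$ must receive it from a neighbor, which by definition of $X_i$ lies in $X_i$. For connectivity, I would trace the causal history of rumor $i$: every $u \in X_i \setminus \{s_i\}$ first learned rumor $i$ from some neighbor $u' \in X_i$, and iterating this parent relation must terminate at $s_i$, yielding a spanning tree of $X_i$ inside $G$. In the degenerate case $|X_i|=1$, the singleton $\{s_i\}$ already dominates $V$ and is trivially connected. Either way $|X_i| \geq |\dsopt|$, and since each node in $X_i$ transmits rumor $i$ at least once, we get $T_i \geq |X_i| \geq |\dsopt| \geq |\dsopt|-1$.

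Next I would combine the per-rumor bound with the compression constraint. Each of the $m_{opt}$ transmissions carries a batch of at most $c$ rumors, so it contributes to at most $c$ of the counters $T_i$. Summing over $i$ therefore yields
\[
\sum_{i=1}^{k} T_i \;\leq\; c \cdot m_{opt}.
\]
Combining this with $T_i \geq |\dsopt|-1$ for every $i$ and dividing by $c$ produces the claimed inequality $\frac{k(|\dsopt|-1)}{c} \leq m_{opt}$.

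The main obstacle is the connectivity argument for $X_i$, which requires a temporal (causal) justification rather than a purely graph-theoretic one, together with the tiny boundary case $|X_i|=1$ where one must observe that the source alone already forms a connected dominating set. Everything else is straightforward bookkeeping using the compression parameter, directly from the definition of $c$.
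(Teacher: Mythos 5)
Your proof is correct and takes essentially the same route as the paper: a per-rumor argument that the set of nodes transmitting each rumor forms a connected dominating set (the paper imports this from Lemma~\ref{lemma:approx1}), combined with the observation that each batch carries at most $c$ rumors. Your version is merely more self-contained and marginally tighter—the explicit double-counting $\sum_{i} T_i \leq c \cdot m_{opt}$ and the causal connectivity argument give $\frac{k|\dsopt|}{c} \leq m_{opt}$, from which the paper's bound with the $|\dsopt|-1$ slack follows immediately.
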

\begin{proof}
Assume that $c=1$. We proved in Lemma \ref{lemma:approx1} that for any $s \in S$,
it is optimal to propagate node $s$ rumor using the MCDS.
Therefore, $m_{opt} \geq \sum_{s \in S}|\dsopt^{s}| \geq k \cdot \min_{s \in S}{|\dsopt^{s}|}$.

Now assume $c > 1$. For each $\dsopt^{s}$,
we cannot compress more than $c$ messages per node in $\dsopt^{s}$.
Therefore, every node $s$ is accountable for at least $\frac{|\dsopt^{s}|}{c}$ messages.
Our claim follows since:
\[
m_{opt} \geq \sum_{s \in S}\frac{|\dsopt^{s}|}{c} \geq
\frac{k \min_{s \in S}{|\dsopt^{s}|}}{c} \geq \frac{k (|\dsopt| -1 )}{c}.
\]
\end{proof}

Before introducing our main algorithm, recall that in centralized setting
each node knows in advance the structure of the entire topology.
Thus, every deterministic algorithm can be run inside each node without incurring additional messages.
Algorithm \ref{Algorithm-Approximation} is as follows:

\begin{algorithm}[H]
\KwIn{Graph $G = (V, E)$ and a set of source nodes $S$.}
\KwOut{A gossip schedule from each node $s \in S$ to all nodes in $V$.}

Find a connected dominating set $\ds$ in $G$\label{alg:findConnectedDominatingSet}.\\

Select the node with the lowest id as the root $r$. \newline
\tcc*[l]{For the next part, we assume we are provided with an interference aware protocol.}
Send the messages from all source nodes to $r$ over a path in $\ds$, aggregating messages when possible. \label{alg:sendMessageFromAllNodes}\\
Create a rooted arborescence from $r$, and
send all messages from $r$ to all nodes along $\ds$, sending exactly $\frac{k}{c}$ messages
by each transmitting node (internal and source). \label{alg:sendMessageFromR}
\caption{Message efficient $\gs$ \label{Algorithm-Approximation}}
\end{algorithm}
\

\begin{claim}\label{claim:allSourcesMustTransmit}
$k \leq m_{opt}$.
\end{claim}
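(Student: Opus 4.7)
The plan is to establish the lower bound by a direct counting argument on the source nodes themselves. Since each of the $k$ rumors originates at a distinct source node $s \in S$, the rumor of $s$ is initially present \emph{only} at $s$. For this rumor to ever reach any other node in $V$ (and broadcast requires it to reach all of them, assuming $|V| \ge 2$), the source $s$ itself must perform at least one transmission; no other node can transmit $s$'s rumor before receiving it, and the only node that has it at time $0$ is $s$.

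Thus I would argue: fix any feasible schedule achieving $m_{opt}$ messages. For each $s \in S$, let $t_s$ denote the number of transmissions performed by $s$ in this schedule. The observation above forces $t_s \geq 1$ for every $s \in S$. Summing over $s \in S$ and noting that the $t_s$ count disjoint sets of transmissions (each transmission is attributed to a single transmitter), we get $m_{opt} \geq \sum_{s \in S} t_s \geq k$.

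The one subtlety worth noting explicitly is that the compression factor $c$ does not help here: compression can merge rumors at an \emph{intermediate} node that already holds several rumors, but it cannot substitute for the very first transmission of a rumor out of its unique source. Different sources reside at different nodes and therefore cannot share their initial transmission. There is no real obstacle in this proof; the main care is only to make explicit that the sources are distinct and that the trivial case $|V|=1$ is excluded (or else handled as $m_{opt}=0$ with no destinations to reach).
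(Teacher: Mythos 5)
Your proof is correct and is essentially the same argument the paper gives: the paper's one-line proof is exactly that every node $s \in S$ must transmit at least once, and your write-up simply makes explicit the supporting details (rumors start only at their distinct sources, transmissions by distinct sources are disjoint, and compression cannot replace a source's first transmission). Your added care about the degenerate case $|V|=1$ is a fine touch but does not change the substance.
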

\begin{proof}
The proof follows by the fact that every node $s \in S$ must transmit at least once.
\end{proof}

\begin{claim}\label{claim:SendingAllNodes}
Line \ref{alg:sendMessageFromAllNodes} in Algorithm \ref{Algorithm-Approximation} uses at most
$|\ds| \cdot \frac{k}{c} + k$ messages.
\end{claim}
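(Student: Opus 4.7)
The plan is to decompose the transmissions performed in line \ref{alg:sendMessageFromAllNodes} into two groups: \emph{injection} transmissions, by which each source places its rumor into the backbone $\ds$, and \emph{convergecast} transmissions inside $\ds$, which carry the accumulated rumors toward the root $r$ along a spanning tree of $\ds$.

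For the first group, I would argue that each source $s \in S$ transmits at most once to hand off its rumor to a neighbor in $\ds$ (such a neighbor exists because $\ds$ is a dominating set; when $s$ is itself in $\ds$, this injection coincides with the first convergecast transmission from $s$). Summing over all sources gives at most $k$ messages.

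For the second group, I would use the compression bound. Since only $k$ rumors exist overall and each batch compresses up to $c$ of them, any node of $\ds$ needs to send at most $\lceil k/c \rceil$ batches toward $r$, no matter how large the subtree hanging off it is. Summing over the $|\ds|$ backbone nodes yields at most $|\ds| \cdot k/c$ convergecast transmissions, and combining the two groups gives the claimed bound $|\ds| \cdot k/c + k$.

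The main subtlety is the overlap when a source itself lies in $\ds$: a single physical transmission by such a node could be charged both as an injection and as a convergecast step. I would resolve this by observing that the decomposition is used only to obtain an upper bound, so any double counting merely loosens rather than invalidates the bound. A related minor issue is the integrality of $k/c$; it is absorbed by interpreting $k/c$ with the ceiling implicit, and in any case it is dominated by the additive $+k$ term.
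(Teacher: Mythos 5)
Your proposal matches the paper's argument: the paper likewise bounds each node of $\ds$ by $\frac{k}{c}$ batch transmissions (giving $|\ds|\cdot\frac{k}{c}$) and adds $k$ extra messages to cover sources that may lie outside $\ds$, exactly your injection/convergecast split. Your extra remarks on double counting and the implicit ceiling in $\frac{k}{c}$ only tighten a step the paper glosses over, so the proof is correct and essentially identical.
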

\begin{proof}
Since we have $k$ sources, the maximum number of messages that any node belonging to $\ds$
may send is at most $\frac{k}{c}$, and we need to add the additional $k$ messages
to account the fact that some of the sources
may not belong to the dominating set $\ds$.
\end{proof}

\begin{claim}\label{claim:SendingFromRoot}
Line \ref{alg:sendMessageFromR} in Algorithm \ref{Algorithm-Approximation}
uses at most $(|\ds| + 1) \cdot \frac{k}{c}$ messages.
\end{claim}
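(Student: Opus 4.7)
The plan is to count, in the broadcast phase starting from $r$, how many transmissions the arborescence produces, given that $r$ already stores every rumor at the end of Line~\ref{alg:sendMessageFromAllNodes}. First I would argue that a suitable arborescence $T$ exists whose every non-leaf vertex lies in $\ds$: since $\ds$ is connected and dominating, one can build a spanning tree of the subgraph induced by $\ds$ rooted at $r$, and then attach each vertex of $V\setminus\ds$ as a leaf to any of its $\ds$-neighbors. In this tree the only vertices that need to transmit are the internal vertices, of which there are at most $|\ds|$, plus at most one additional transmitter if the source $s$ has to hand off its accumulated bundle to $r$ at the boundary between the two phases; this is where the "$+1$" in the bound will come from.

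Second, I would bound the per-transmitter cost. Because $r$ holds all $k$ rumors and at most $c$ rumors can be packed into a single batch, the full collection fits into $\lceil k/c\rceil$ batches, which the algorithm prescribes as "exactly $\frac{k}{c}$ messages" per transmitting node. Crucially, a node need not transmit separately to each of its children in the protocol interference model; a single omnidirectional broadcast of a batch is received by every child simultaneously, so every internal vertex of $T$ forwards each batch exactly once. Hence each transmitting node contributes at most $\frac{k}{c}$ messages.

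Finally, I would multiply: at most $(|\ds|+1)$ transmitters, each contributing at most $\frac{k}{c}$ messages, giving the claimed upper bound $(|\ds|+1)\cdot\frac{k}{c}$. The main obstacle, and the only place where the argument is not completely mechanical, is the bookkeeping around the ceiling $\lceil k/c\rceil$ and the justification of the extra "$+1$" transmitter: one must be careful to account for either the root's initial injection of the aggregated bundle or for the rounding slack from $\lceil k/c\rceil\le k/c+1$, and absorb it into the single additional factor of $\frac{k}{c}$ that the statement allows. Everything else follows directly from the structural properties of the connected dominating set $\ds$ established before the algorithm.
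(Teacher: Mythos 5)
Your argument is correct and matches the paper's proof in essence: both count at most $|\ds|+1$ transmitting nodes (the internal nodes of the arborescence, which lie in $\ds$, plus the root $r$), each sending $\frac{k}{c}$ batches, with one omnidirectional transmission per batch reaching all children at once. Two small corrections to your bookkeeping: the paper's $+1$ accounts for $r$ itself rather than for a hand-off from the sources to $r$ (that hand-off belongs to Line \ref{alg:sendMessageFromAllNodes} and is already charged in Claim \ref{claim:SendingAllNodes}), and the ceiling issue does not arise because the algorithm prescribes exactly $\frac{k}{c}$ messages per transmitting node --- indeed your proposed absorption would fail in general, since $|\ds|\left\lceil \frac{k}{c}\right\rceil$ can exceed $(|\ds|+1)\frac{k}{c}$ whenever $|\ds| > \frac{k}{c}$.
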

\begin{proof}
For each internal node and for node $r$, we need $\frac{k}{c}$ messages to deliver all $k$ source messages to all of its neighbors.
Since we have $|\ds| + 1$ such nodes if $r$ belongs to $\ds$ and $|\ds|$ nodes otherwise, the claim follows.
\end{proof}

By combining Claims \ref{claim:allSourcesMustTransmit}, \ref{claim:SendingAllNodes} and \ref{claim:SendingFromRoot} we get that
the number of messages sent in Algorithm \ref{Algorithm-Approximation} is at most:
\[
(|\ds| + 1) \frac{k}{c} + |\ds| \frac{k}{c} + k  =
2 |\ds| \frac{k}{c} + k (1 + \frac{1}{c})
\]

We again use the $\alpha$-approximation algorithm for MCDS and obtain:
\[
2 ( \alpha |\dsopt|) \frac{k}{c} + k (1 + \frac{1}{c}) =
2 \alpha \frac{k (|\dsopt| - 1)}{c} + k(\frac{2 \alpha + 1}{c} + 1) \leq
\]
\[
2 \alpha m_{opt} + m_{opt} (\frac{2 \alpha + 1}{c} + 1) =
\]
\[
m_{opt} (2\alpha + 1)(1 + \frac{1}{c}).
\]
Using the approximation algorithm for MCDS from \cite{MCDSGeneral} for general graphs and
\cite{MCDSUDG} for UDG yields:
\begin{theorem}\label{theorem:approxwithouttime}
Algorithm \ref{Algorithm-Approximation}
is a $(2H(\maxdegree) + 5) (1+ \frac{1}{c})$-approximation algorithm for general graphs and $15.6 (1 + \frac{1}{c})$-approximation algorithm for UDG for message efficient $\gs$.
\end{theorem}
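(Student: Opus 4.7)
The plan is to bound the total number of messages sent by Algorithm \ref{Algorithm-Approximation} and then compare the result against the two lower bounds already available for $m_{opt}$. Adding the bounds from Claim \ref{claim:SendingAllNodes} and Claim \ref{claim:SendingFromRoot}, I would first observe that the algorithm transmits at most $|\ds|\frac{k}{c} + k + (|\ds|+1)\frac{k}{c} = 2|\ds|\frac{k}{c} + k\bigl(1+\frac{1}{c}\bigr)$ messages in total, independently of the source structure.

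Next, I would substitute the MCDS approximation guarantee $|\ds| \leq \alpha |\dsopt|$ obtained in line \ref{alg:findConnectedDominatingSet}. To match the shape of Lemma \ref{lemma:lowerBound}, I would split $2\alpha|\dsopt|\frac{k}{c}$ as $2\alpha \cdot \frac{k(|\dsopt|-1)}{c} + \frac{2\alpha k}{c}$, so the first summand is bounded by $2\alpha\, m_{opt}$ directly via Lemma \ref{lemma:lowerBound}. The leftover additive term $k\bigl(\frac{2\alpha+1}{c}+1\bigr)$ depends only on $k$, which I would charge against $m_{opt}\bigl(\frac{2\alpha+1}{c}+1\bigr)$ using the trivial lower bound $k \leq m_{opt}$ from Claim \ref{claim:allSourcesMustTransmit}.

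Summing the two contributions yields $m_{opt}\bigl(2\alpha + \frac{2\alpha+1}{c}+1\bigr)$, which factors cleanly as $m_{opt}(2\alpha+1)\bigl(1+\frac{1}{c}\bigr)$. I would then instantiate $\alpha$ with the best known MCDS approximation ratios cited in Section \ref{sec:previouswork}: $\alpha = 2 + H(\maxdegree)$ for general graphs, giving the factor $(2H(\maxdegree)+5)(1+\frac{1}{c})$, and the constant factor known for Unit Disk Graphs, giving the UDG bound stated in the theorem.

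The argument is really a careful accounting exercise on top of Claims \ref{claim:allSourcesMustTransmit}--\ref{claim:SendingFromRoot} and Lemma \ref{lemma:lowerBound}, so I do not expect a genuine conceptual obstacle. The only mildly delicate step is the re-expression of $|\dsopt|$ as $(|\dsopt|-1)+1$ before invoking the two different lower bounds on $m_{opt}$; this is what allows the MCDS-based bound and the trivial $k \leq m_{opt}$ bound to combine into a single multiplicative factor of $(2\alpha+1)(1+\frac{1}{c})$ rather than a looser additive expression.
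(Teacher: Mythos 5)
Your proposal is correct and follows essentially the same route as the paper's own proof: summing Claims \ref{claim:SendingAllNodes} and \ref{claim:SendingFromRoot} to get $2|\ds|\frac{k}{c} + k(1+\frac{1}{c})$, substituting $|\ds| \leq \alpha|\dsopt|$, rewriting $|\dsopt|$ as $(|\dsopt|-1)+1$, charging the two pieces against Lemma \ref{lemma:lowerBound} and Claim \ref{claim:allSourcesMustTransmit} respectively, and factoring the result as $m_{opt}(2\alpha+1)(1+\frac{1}{c})$. One caveat, inherited from the paper itself rather than introduced by you: for UDG with $\alpha = 7.8$ the formula yields $(2\cdot 7.8+1)(1+\frac{1}{c}) = 16.6(1+\frac{1}{c})$, not the $15.6(1+\frac{1}{c})$ stated in the theorem, so the UDG constant does not actually follow from this derivation as written.
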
 
\subsection{Combining Time Efficiency}\label{sec:schedulinganalysis}

Algorithm \ref{alg:findConnectedDominatingSet} is optimized to reduce the message complexity of
the gossiping scheme.
However, the criterion of providing optimal results for time efficiency (i.e.,
minimizing the time until all $k$ rumors are received by all nodes) is still not satisfied.
Let $s_{opt}$ be the minimum time required to distribute all $k$ rumors to all nodes.
Clearly, $s_{opt} \geq \diam{G}$, since each rumor must propagate over the diameter.
Thus, the rumor distribution time of any algorithm that uses
a connected dominating set $\ds$ as a backbone is bounded by $\diam{\ds} + 2$ (e.g., when sender and receivers are leafs in $\ds$).
If $\diam{\ds}$ is relatively small, we can use one of the
interference aware scheduling algorithms on top of Algorithm~\ref{Algorithm-Approximation}, and produce a time and message efficient distribution scheme.
Therefore reducing the diameter of the resulting dominating set will necessarily improve the time efficiency of the algorithm.
Note that this task is not always trivial, since for some instances the diameter of the
minimum connected dominating set is $\bo(n)$ times the optimal diameter, 
see example Figure \ref{fig:mcdsScheduling}.

In this section, we first present Algorithm~\ref{Algorithm-CDS} for the problem of finding a minimum connected dominating set with bounded diameter, where we aim to find a dominating set $\ds$ of small size and small diameter,
with respect to $|\dsopt|$ and $\diam{G}$.
Once the backbone is constructed, we analyze the cost of scheduling messages over it and incorporate
the effect of interference.

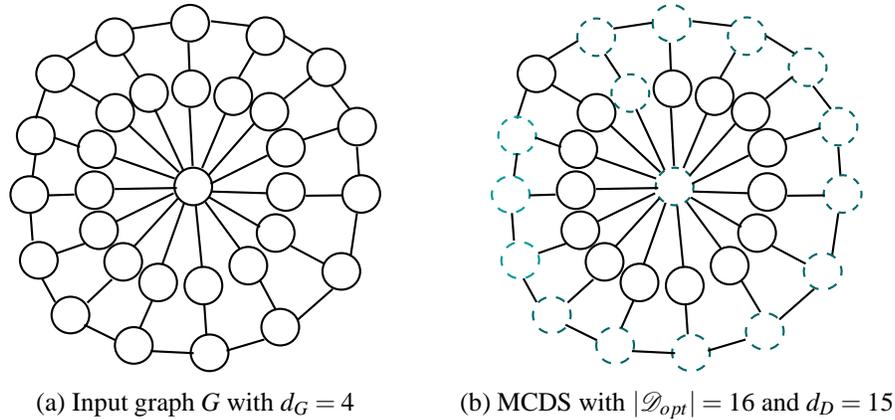
\begin{figure}
\centering
\begin{subfigure}[t]{0.4\textwidth}
\centering
\scalebox{0.7} 
{
\begin{pspicture}(0,-3.5)(7.06,3.5)
\psellipse[linewidth=0.04,dimen=outer](3.4912026,0.02875)(0.37829113,0.37625)
\psellipse[linewidth=0.04,dimen=outer](3.4512026,1.8875)(0.37829113,0.3675)
\psellipse[linewidth=0.04,dimen=outer](3.4312027,3.1325)(0.37829113,0.3675)
\psellipse[linewidth=0.04,dimen=outer](1.6139873,-0.04125)(0.37829113,0.37625)
\psellipse[linewidth=0.04,dimen=outer](2.0082912,1.42875)(0.37829113,0.37625)
\psellipse[linewidth=0.04,dimen=outer](2.1610758,-1.40375)(0.37829113,0.37625)
\psellipse[linewidth=0.04,dimen=outer](3.6839874,-1.85625)(0.37829113,0.37625)
\psellipse[linewidth=0.04,dimen=outer](4.536519,-1.47625)(0.37829113,0.37625)
\psellipse[linewidth=0.04,dimen=outer](4.936519,1.44875)(0.37829113,0.37625)
\psellipse[linewidth=0.04,dimen=outer](6.0170255,2.29375)(0.37829113,0.37625)
\psellipse[linewidth=0.04,dimen=outer](0.8685443,2.17)(0.37829113,0.3675)
\psellipse[linewidth=0.04,dimen=outer](0.37829113,-0.125)(0.37829113,0.3675)
\psellipse[linewidth=0.04,dimen=outer](1.1589241,-2.415)(0.37829113,0.3675)
\psellipse[linewidth=0.04,dimen=outer](3.8015823,-3.1325)(0.37829113,0.3675)
\psellipse[linewidth=0.04,dimen=outer](5.143481,-2.645)(0.37829113,0.3675)
\psline[linewidth=0.04cm](3.48,0.42)(3.42,1.52)
\psline[linewidth=0.04cm](3.66,0.38)(4.1,1.42)
\psline[linewidth=0.04cm](3.721519,-0.2675)(4.38,-1.16)
\psline[linewidth=0.04cm](3.54,-0.34)(3.657595,-1.515)
\psline[linewidth=0.04cm](3.32,-0.28)(2.9,-1.46)
\psline[linewidth=0.04cm](3.2,-0.2)(2.36,-1.12)
\psline[linewidth=0.04cm](1.74,1.62)(1.16,1.96)
\psline[linewidth=0.04cm](3.7,-2.2)(3.74,-2.78)
\psline[linewidth=0.04cm](4.76,-1.76)(5.06,-2.32)
\psline[linewidth=0.04cm](5.22,1.7)(5.74,2.12)
\psline[linewidth=0.04cm](3.42,2.22)(3.4,2.76)
\psline[linewidth=0.04cm](1.2332911,-0.125)(0.74,-0.14)
\psline[linewidth=0.04cm](1.0884811,2.45)(1.66,2.8)
\psline[linewidth=0.04cm](6.6,0.78)(6.64,0.24)
\psline[linewidth=0.04cm](6.12,1.94)(6.46,1.5)
\psline[linewidth=0.04cm](4.84,-2.88)(4.1222787,-3.0925)
\psline[linewidth=0.04cm](2.018481,-2.895)(1.42,-2.6)
\psline[linewidth=0.04cm](0.8686076,-2.17)(0.62,-1.72)
\psline[linewidth=0.04cm](0.5,1.32)(0.6837975,1.89)
\psline[linewidth=0.04cm](1.9483545,-1.7375)(1.48,-2.18)
\psellipse[linewidth=0.04,dimen=outer](5.2500634,-0.07625)(0.37829113,0.37625)
\psline[linewidth=0.04cm](4.9,-0.1)(3.8343039,-0.0675)
\psellipse[linewidth=0.04,dimen=outer](6.681709,-0.11375)(0.37829113,0.37625)
\psline[linewidth=0.04cm](6.34,-0.14)(5.6283545,-0.1475)
\psline[linewidth=0.04cm](6.6,-0.48)(6.48,-1.3)
\psellipse[linewidth=0.04,dimen=outer](5.2500634,0.76375)(0.37829113,0.37625)
\psellipse[linewidth=0.04,dimen=outer](6.601709,1.16625)(0.37829113,0.37625)
\psellipse[linewidth=0.04,dimen=outer](4.2500634,1.74375)(0.37829113,0.37625)
\psellipse[linewidth=0.04,dimen=outer](4.8617086,2.88625)(0.37829113,0.37625)
\psellipse[linewidth=0.04,dimen=outer](2.6500633,1.80375)(0.37829113,0.37625)
\psellipse[linewidth=0.04,dimen=outer](2.0017087,2.86625)(0.37829113,0.37625)
\psellipse[linewidth=0.04,dimen=outer](5.056519,-0.85625)(0.37829113,0.37625)
\psellipse[linewidth=0.04,dimen=outer](6.303481,-1.585)(0.37829113,0.3675)
\psline[linewidth=0.04cm](5.34,-1.1)(5.98,-1.44)
\psline[linewidth=0.04cm](3.5,-3.14)(2.68,-3.1)
\psline[linewidth=0.04cm](6.14,-1.92)(5.48,-2.5)
\psellipse[linewidth=0.04,dimen=outer](2.8239872,-1.79625)(0.37829113,0.37625)
\psellipse[linewidth=0.04,dimen=outer](2.3615823,-2.9925)(0.37829113,0.3675)
\psline[linewidth=0.04cm](2.72,-2.12)(2.48,-2.66)
\psellipse[linewidth=0.04,dimen=outer](1.7010759,-0.80375)(0.37829113,0.37625)
\psellipse[linewidth=0.04,dimen=outer](0.5589241,-1.375)(0.37829113,0.3675)
\psline[linewidth=0.04cm](0.44,-1.0)(0.40126583,-0.4775)
\psline[linewidth=0.04cm](1.4283544,-1.0575)(0.9,-1.24)
\psellipse[linewidth=0.04,dimen=outer](1.661076,0.73625)(0.37829113,0.37625)
\psellipse[linewidth=0.04,dimen=outer](0.5010759,0.99625)(0.37829113,0.37625)
\psline[linewidth=0.04cm](0.44,0.22)(0.5,0.66)
\psline[linewidth=0.04cm](2.328481,3.03)(3.06,3.14)
\psline[linewidth=0.04cm](3.828481,3.17)(4.58,3.08)
\psline[linewidth=0.04cm](5.228481,2.89)(5.76,2.56)
\psline[linewidth=0.04cm](5.58,0.92)(6.26,1.14)
\psline[linewidth=0.04cm](4.36,2.1)(4.64,2.58)
\psline[linewidth=0.04cm](1.34,0.78)(0.84,0.96)
\psline[linewidth=0.04cm](2.46,2.1)(2.1648102,2.545)
\psline[linewidth=0.04cm](3.3,0.34)(2.76,1.5)
\psline[linewidth=0.04cm](3.78,0.24)(4.66,1.24)
\psline[linewidth=0.04cm](3.82,0.08)(4.9,0.62)
\psline[linewidth=0.04cm](2.0,0.58)(3.14,0.18)
\psline[linewidth=0.04cm](2.0,-0.02)(3.18,0.0)
\psline[linewidth=0.04cm](2.02,-0.64)(3.12,-0.1)
\psline[linewidth=0.04cm](3.8,-0.18)(4.76,-0.74)
\psline[linewidth=0.04cm](3.22,0.3)(2.26,1.2)
\end{pspicture}
}
\caption{Input graph $G$ with $\diam{G}=4$}
\label{fig:ScheduleLengthAndMCDSInput}
\end{subfigure}%
\begin{subfigure}[t]{0.4\textwidth}
\centering
\scalebox{0.7} 
{
\begin{pspicture}(0,-3.5)(7.06,3.5)
\definecolor{color2706}{rgb}{0.0,0.4,0.4}
\definecolor{color2763}{rgb}{0.0,0.6,0.6}
\psellipse[linewidth=0.04,linecolor=color2706,linestyle=dashed,dash=0.16cm 0.16cm,dimen=outer](3.4912026,0.02875)(0.37829113,0.37625)
\psellipse[linewidth=0.04,dimen=outer](3.4512026,1.8875)(0.37829113,0.3675)
\psellipse[linewidth=0.04,linecolor=color2706,linestyle=dashed,dash=0.16cm 0.16cm,dimen=outer](3.4312027,3.1325)(0.37829113,0.3675)
\psellipse[linewidth=0.04,dimen=outer](1.6139873,-0.04125)(0.37829113,0.37625)
\psellipse[linewidth=0.04,dimen=outer](2.0082912,1.42875)(0.37829113,0.37625)
\psellipse[linewidth=0.04,dimen=outer](2.1610758,-1.40375)(0.37829113,0.37625)
\psellipse[linewidth=0.04,dimen=outer](3.6839874,-1.85625)(0.37829113,0.37625)
\psellipse[linewidth=0.04,dimen=outer](4.536519,-1.47625)(0.37829113,0.37625)
\psellipse[linewidth=0.04,dimen=outer](4.936519,1.44875)(0.37829113,0.37625)
\psellipse[linewidth=0.04,linecolor=color2706,linestyle=dashed,dash=0.16cm 0.16cm,dimen=outer](6.0170255,2.29375)(0.37829113,0.37625)
\psellipse[linewidth=0.04,dimen=outer](0.8685443,2.17)(0.37829113,0.3675)
\psellipse[linewidth=0.04,linecolor=color2763,linestyle=dashed,dash=0.16cm 0.16cm,dimen=outer](0.37829116,-0.125)(0.37829116,0.3675)
\psellipse[linewidth=0.04,linecolor=color2706,linestyle=dashed,dash=0.16cm 0.16cm,dimen=outer](1.1589241,-2.415)(0.37829113,0.3675)
\psellipse[linewidth=0.04,linecolor=color2706,linestyle=dashed,dash=0.16cm 0.16cm,dimen=outer](3.8015823,-3.1325)(0.37829113,0.3675)
\psellipse[linewidth=0.04,linecolor=color2706,linestyle=dashed,dash=0.16cm 0.16cm,dimen=outer](5.203481,-2.725)(0.37829113,0.3675)
\psline[linewidth=0.04cm](3.48,0.42)(3.42,1.52)
\psline[linewidth=0.04cm](3.66,0.38)(4.1,1.42)
\psline[linewidth=0.04cm](3.721519,-0.2675)(4.38,-1.16)
\psline[linewidth=0.04cm](3.54,-0.34)(3.657595,-1.515)
\psline[linewidth=0.04cm](3.32,-0.28)(2.9,-1.46)
\psline[linewidth=0.04cm](3.2,-0.2)(2.36,-1.12)
\psline[linewidth=0.04cm](1.74,1.62)(1.16,1.96)
\psline[linewidth=0.04cm](3.7,-2.2)(3.74,-2.78)
\psline[linewidth=0.04cm](4.76,-1.76)(5.12,-2.36)
\psline[linewidth=0.04cm](5.22,1.7)(5.74,2.12)
\psline[linewidth=0.04cm](3.42,2.22)(3.4,2.76)
\psline[linewidth=0.04cm](1.2332911,-0.125)(0.74,-0.14)
\psline[linewidth=0.04cm](1.0884811,2.45)(1.66,2.8)
\psline[linewidth=0.04cm](6.6,0.78)(6.64,0.24)
\psline[linewidth=0.04cm](6.12,1.94)(6.46,1.5)
\psline[linewidth=0.04cm](4.84,-2.88)(4.1222787,-3.0925)
\psline[linewidth=0.04cm](2.018481,-2.895)(1.42,-2.6)
\psline[linewidth=0.04cm](0.8686076,-2.17)(0.62,-1.72)
\psline[linewidth=0.04cm](0.5,1.32)(0.6837975,1.89)
\psline[linewidth=0.04cm](1.9483545,-1.7375)(1.46,-2.18)
\psellipse[linewidth=0.04,dimen=outer](5.2500634,-0.07625)(0.37829113,0.37625)
\psline[linewidth=0.04cm](4.9,-0.1)(3.8343039,-0.0675)
\psellipse[linewidth=0.04,linecolor=color2706,linestyle=dashed,dash=0.16cm 0.16cm,dimen=outer](6.681709,-0.11375)(0.37829113,0.37625)
\psline[linewidth=0.04cm](6.34,-0.14)(5.6283545,-0.1475)
\psline[linewidth=0.04cm](6.6,-0.48)(6.48,-1.3)
\psellipse[linewidth=0.04,dimen=outer](5.2500634,0.76375)(0.37829113,0.37625)
\psellipse[linewidth=0.04,linecolor=color2706,linestyle=dashed,dash=0.16cm 0.16cm,dimen=outer](6.601709,1.16625)(0.37829113,0.37625)
\psellipse[linewidth=0.04,dimen=outer](4.2500634,1.74375)(0.37829113,0.37625)
\psellipse[linewidth=0.04,linecolor=color2706,linestyle=dashed,dash=0.16cm 0.16cm,dimen=outer](4.8617086,2.88625)(0.37829113,0.37625)
\psellipse[linewidth=0.04,linecolor=color2706,linestyle=dashed,dash=0.16cm 0.16cm,dimen=outer](2.6500633,1.80375)(0.37829113,0.37625)
\psellipse[linewidth=0.04,linecolor=color2706,linestyle=dashed,dash=0.16cm 0.16cm,dimen=outer](2.0017087,2.86625)(0.37829113,0.37625)
\psellipse[linewidth=0.04,dimen=outer](5.056519,-0.85625)(0.37829113,0.37625)
\psellipse[linewidth=0.04,linecolor=color2706,linestyle=dashed,dash=0.16cm 0.16cm,dimen=outer](6.303481,-1.585)(0.37829113,0.3675)
\psline[linewidth=0.04cm](5.34,-1.1)(5.98,-1.44)
\psline[linewidth=0.04cm](3.5,-3.14)(2.68,-3.1)
\psline[linewidth=0.04cm](6.14,-1.92)(5.48,-2.5)
\psellipse[linewidth=0.04,dimen=outer](2.8239872,-1.79625)(0.37829113,0.37625)
\psellipse[linewidth=0.04,linecolor=color2706,linestyle=dashed,dash=0.16cm 0.16cm,dimen=outer](2.3615823,-2.9925)(0.37829113,0.3675)
\psline[linewidth=0.04cm](2.72,-2.12)(2.48,-2.66)
\psellipse[linewidth=0.04,dimen=outer](1.7010759,-0.80375)(0.37829113,0.37625)
\psellipse[linewidth=0.04,linecolor=color2763,linestyle=dashed,dash=0.16cm 0.16cm,dimen=outer](0.5589241,-1.375)(0.37829113,0.3675)
\psline[linewidth=0.04cm](0.44,-1.0)(0.40126583,-0.4775)
\psline[linewidth=0.04cm](1.4283544,-1.0575)(0.9,-1.24)
\psellipse[linewidth=0.04,dimen=outer](1.661076,0.73625)(0.37829113,0.37625)
\psellipse[linewidth=0.04,linecolor=color2763,linestyle=dashed,dash=0.16cm 0.16cm,dimen=outer](0.5010759,0.99625)(0.37829113,0.37625)
\psline[linewidth=0.04cm](0.44,0.22)(0.5,0.66)
\psline[linewidth=0.04cm](2.328481,3.03)(3.06,3.14)
\psline[linewidth=0.04cm](3.828481,3.17)(4.58,3.08)
\psline[linewidth=0.04cm](5.228481,2.89)(5.76,2.56)
\psline[linewidth=0.04cm](5.58,0.92)(6.26,1.14)
\psline[linewidth=0.04cm](4.36,2.1)(4.64,2.58)
\psline[linewidth=0.04cm](1.34,0.78)(0.84,0.96)
\psline[linewidth=0.04cm](2.46,2.1)(2.1648102,2.545)
\psline[linewidth=0.04cm](3.3,0.34)(2.76,1.5)
\psline[linewidth=0.04cm](3.78,0.24)(4.66,1.24)
\psline[linewidth=0.04cm](3.82,0.08)(4.9,0.62)
\psline[linewidth=0.04cm](2.0,0.58)(3.14,0.18)
\psline[linewidth=0.04cm](2.0,-0.02)(3.18,0.0)
\psline[linewidth=0.04cm](2.02,-0.64)(3.12,-0.1)
\psline[linewidth=0.04cm](3.8,-0.18)(4.76,-0.74)
\psline[linewidth=0.04cm](3.22,0.3)(2.26,1.2)
\end{pspicture}
}
\caption{MCDS with $|\dsopt|=16$ and $\diam{D}=15$}
\label{fig:MCDSWithNonOptimalScheduling}
\end{subfigure}%
\caption{Instance where the optimal MCDS does not yield optimal time efficiency.
The diameter of the input graph is always $4$, but the diameter of the chosen minimum dominating set
is equal to the size of the outer ring (and equals $\frac{n-1}{2}$).}
\label{fig:mcdsScheduling}					
\end{figure} 

\begin{algorithm}[H]
\KwIn{Graph $G = (V, E)$}
\KwOut{Connected dominating set $\ds$}

Find a connected dominating set $\ds$\label{line:findMCDS}.\\

Let $r$ be the node with minimum id in $\ds$.

Run a DFS traversal $DFS$, from $r$ on $\ds$ and partition $\ds$ to  clusters,
$C_{1},C_{2},...,C_{k}$,
each with diameter $\diam{G}$
(a node $v \in \ds$ will belong to cluster $j = \left\lfloor \frac{h_{dfs}(r,v)}{\diam{G}}\right\rfloor$;
a node that is traversed more than once can choose the dominating cluster arbitrarily). The value
$h_{dfs}(r,v)$ stands for the distance between $r$ and $v$ in DFS tree.\\ 

Let $c_i$ be the node with minimal id in cluster $C_i$; Set $c_i$ as the cluster leader.\\

Connect all cluster leaders to $r$ using shortest paths in graph $G$.\\

Add to set $\ds$ all nodes that belong to the found shortest paths.\\

\caption{Compute Minimum Diameter Connected Dominating Set\label{Algorithm-CDS}}
\end{algorithm}
\
We state the following.
\begin{lemma}
$ \diam{\ds} \leq 4 \cdot \diam{G}$.
\end{lemma}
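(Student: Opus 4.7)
The plan is to show that every $u \in \ds$ is within $2\diam{G}$ hops from the root $r$ in the augmented backbone; the stated bound $\diam{\ds} \le 4\diam{G}$ then follows immediately by the triangle inequality applied to an arbitrary pair $u,v \in \ds$.

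Writing $h_{\ds}$ for hop distance in $\ds$, I would bound $h_{\ds}(u,r) \le 2\diam{G}$ by splitting into two cases according to how $u$ enters $\ds$. If $u$ belongs to $\ds$ before line 6 of Algorithm~\ref{Algorithm-CDS} is executed, then $u$ lies in some cluster $C_i$ with leader $c_i$. By the cluster construction in line 3, the diameter of $C_i$ inside $\ds$ is at most $\diam{G}$, hence $h_{\ds}(u,c_i)\le\diam{G}$. Line 5 connects $c_i$ to $r$ via a shortest path in $G$, and since $h_G(c_i,r)\le \diam{G}$, the nodes added along this path in line 6 yield a path of length at most $\diam{G}$ from $c_i$ to $r$ in $\ds$. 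Combining these two estimates gives $h_{\ds}(u,r)\le 2\diam{G}$. If instead $u$ was itself added in line 6, then $u$ lies on a shortest $G$-path of length at most $\diam{G}$ from some leader $c_j$ to $r$, so $h_{\ds}(u,r)\le \diam{G}$ and the same bound holds a fortiori.

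With $h_{\ds}(u,r)\le 2\diam{G}$ established for every $u\in\ds$, the triangle inequality gives $h_{\ds}(u,v)\le h_{\ds}(u,r)+h_{\ds}(r,v)\le 4\diam{G}$ for every pair $u,v\in\ds$, which is exactly $\diam{\ds}\le 4\diam{G}$.

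The main obstacle will be justifying the structural claim used in the first case: that each cluster $C_i$ has diameter at most $\diam{G}$ in the initial $\ds$. The partition rule assigns $v$ to cluster $\lfloor h_{dfs}(r,v)/\diam{G}\rfloor$, so any two members of $C_i$ have DFS-depths differing by strictly less than $\diam{G}$; one must still argue, from the way DFS visits contiguous depth ranges before backtracking, that the corresponding subgraph of $\ds$ is connected and admits a path of length at most $\diam{G}$ between any two of its nodes (otherwise one only obtains a weaker $6\diam{G}$ bound via the DFS-tree detour through the lowest common ancestor). Once this structural fact is in hand, the rest of the proof is the routine triangle-inequality computation above.
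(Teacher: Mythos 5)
Your proof is correct and is essentially the paper's own argument: the paper bounds $h_{\ds}(u,v)$ by the four-leg decomposition $h_{\ds}(u,c_i)+h_{\ds}(c_i,r)+h_{\ds}(r,c_j)+h_{\ds}(c_j,v)\leq 4\cdot\diam{G}$, which is exactly your route through the root, and your reformulation via the eccentricity bound $h_{\ds}(u,r)\leq 2\diam{G}$ plus the triangle inequality is the same computation. If anything you are more careful than the paper, whose proof considers only $u\in C_i$, $v\in C_j$ and never explicitly treats the nodes added along the connecting shortest paths in line 6 --- your second case.

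The one step you leave open --- that each cluster has diameter at most $\diam{G}$ inside $\ds$ --- dissolves once $h_{dfs}(r,v)$ is read as the position of a visit to $v$ along the DFS traversal walk (counting edge traversals, including backtracking), rather than as depth in the DFS tree. This is the reading the rest of the paper forces: the companion size lemma derives the bound of $2\alpha|\dsopt|/\diam{G}$ clusters from ``each edge is traversed at most twice,'' i.e.\ from chopping a walk of length at most $2(|\ds|-1)$ into segments of length $\diam{G}$, which makes no sense for tree depth. Under this reading the structural claim is immediate and you get the full $\diam{G}$ rather than your fallback $6\diam{G}$: if $u,v\in C_i$, their assigned visits occur at walk positions $t_u,t_v$ in the same window $[\,i\cdot\diam{G},(i+1)\diam{G})$, and the subwalk of the DFS walk between these positions stays inside $\ds$ and has length $|t_u-t_v|<\diam{G}$, so $h_{\ds}(u,c_i)\leq\diam{G}$ with no detour through a lowest common ancestor. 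Your caution was warranted, though: under the depth reading the claim genuinely fails --- in the paper's own ring example the CDS is a long path of diameter $\Theta(n)$ in a graph of diameter $4$, and two same-depth nodes in the two branches of a DFS tree rooted mid-path are far apart in $\ds$ --- so the traversal-position interpretation is the one consistent with the lemma.
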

\begin{proof}
Given $u \in C_{i}$ and $v \in C_j$, the length of the path from $u$ to $v$ is equal to
$h_{\ds}(u,c_{i}) + h_{\ds}(c_{i},r) + h_{\ds}(r,c_{j}) + h_{\ds}(c_{j},u) \leq 4 \cdot \diam{G}$, where $h_{\ds}(u,v)$ defines the length
of the path between $u$ and $v$ in $\ds$.
The last inequality holds since, by the definition of the diameter, for any $u,v \in V$, $h_{\ds}(u,v) \leq \diam{G}$.
\end{proof}

Assume we have an algorithm that finds a dominating set with size
$\alpha |\dsopt|$ in line \ref{line:findMCDS} of Algorithm \ref{Algorithm-CDS}, then we have:
\begin{lemma}
$|\ds| \leq 3 (\alpha |\dsopt|)$.
\end{lemma}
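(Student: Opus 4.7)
The plan is to decompose $|\ds|$ as output by Algorithm~\ref{Algorithm-CDS} into two parts: the initial connected dominating set produced in line~\ref{line:findMCDS}, and the collection of nodes added through the shortest paths that connect the cluster leaders $c_i$ to the root $r$. By hypothesis the first part has size at most $\alpha|\dsopt|$, so the task reduces to bounding the second part by $2\alpha|\dsopt|$.

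First, I would bound the number of clusters $C_1,\dots,C_k$. Since the DFS is run on the connected set produced in line~\ref{line:findMCDS}, a DFS walk rooted at $r$ traverses each edge of the DFS tree at most twice, so the maximum value of $h_{dfs}(r,v)$ over $v\in\ds$ is at most $2(|\ds|-1)$. Because a node is assigned to cluster $\lfloor h_{dfs}(r,v)/\diam{G}\rfloor$, the number of clusters satisfies
\[
k \le \left\lceil \frac{2(|\ds|-1)}{\diam{G}} \right\rceil \le \frac{2|\ds|}{\diam{G}}.
\]
Next, I would bound the contribution of each shortest path: by definition of the graph diameter, the shortest path in $G$ connecting a cluster leader $c_i$ to $r$ uses at most $\diam{G}$ edges and hence introduces at most $\diam{G}$ new nodes into $\ds$. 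Multiplying these two bounds gives that the total number of nodes added in the final augmentation step is at most $(2|\ds|/\diam{G})\cdot\diam{G}=2|\ds|$.

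Combining, the final size satisfies $|\ds|_{\textrm{final}} \le |\ds|_{\textrm{initial}} + 2|\ds|_{\textrm{initial}} = 3|\ds|_{\textrm{initial}} \le 3(\alpha|\dsopt|)$, which is the stated bound. The main subtlety, and the step where care is required, is the bound on the number of clusters: one has to argue that the ``DFS distance'' used for cluster assignment really is bounded by the length of the DFS walk (equivalently, by twice the number of tree edges), rather than by some larger quantity obtained by repeatedly revisiting vertices. The clause in the algorithm allowing a node visited more than once to choose a single cluster arbitrarily is precisely what makes this bound sound, since each physical node is placed in exactly one cluster even though the DFS walk may pass through it several times.
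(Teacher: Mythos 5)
Your proof is correct and follows essentially the same route as the paper's: each DFS edge is traversed at most twice, so there are at most $2|\ds|/\diam{G}$ clusters, each contributing at most $\diam{G}$ new nodes via its shortest path to $r$, giving $\alpha|\dsopt| + 2\alpha|\dsopt| = 3\alpha|\dsopt|$. Your added remark about why the DFS-walk length (rather than revisits) governs the cluster count is a fair clarification of a point the paper leaves implicit, but it does not change the argument.
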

\begin{proof}
Since we perform a DFS traversal, each edge is traversed at most twice,
and the number of clusters is at most $2\frac{\alpha |\dsopt|}{\diam{G}}$.
The number of nodes that are added to connect each cluster is at most $\diam{G}$.
Thus, the new size of the dominating set is at most
$\alpha |\dsopt| + 2\frac{\alpha |\dsopt| }{\diam{G}} \cdot \diam{G} =
3 \alpha |\dsopt|$.
\end{proof}

To conclude, using \cite{MCDSGeneral} and \cite{MCDSUDG} for approximating MCDS in general and UDG networks, respectively, we get the following theorem:

\begin{theorem}\label{thm:scheduling}
Algorithm \ref{Algorithm-CDS} computes a connected dominating set of size at most
$3 (2 + H(\maxdegree)) |\dsopt|$
for general graphs and of size at most $15.6 |\dsopt|$ for unit disk graphs, and has
diameter of at most $4 \diam{G}$.
\end{theorem}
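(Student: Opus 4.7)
The plan is to treat this theorem as an essentially immediate corollary of the two preceding lemmas, which already do all of the combinatorial work. Those lemmas guarantee, for whatever MCDS routine is plugged into line~\ref{line:findMCDS} of Algorithm~\ref{Algorithm-CDS}, the uniform bounds $|\ds| \leq 3 \alpha |\dsopt|$ and $\diam{\ds} \leq 4 \diam{G}$, where $\alpha$ is the approximation factor of the chosen subroutine. So no new graph-theoretic argument is required; what remains is just to substitute the best known approximation ratios for the two graph classes and to check that the two lemmas speak about the same object.

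First I would verify that both preceding lemmas refer to the same final set $\ds$ output by Algorithm~\ref{Algorithm-CDS}, rather than to the intermediate MCDS produced in line~\ref{line:findMCDS}; this is important because the shortest-path attachments added in the last step of the algorithm could a priori inflate either the size or the diameter. Inspecting the two proofs confirms that the size bound $3\alpha|\dsopt|$ already folds in the at most $\diam{G}$ extra vertices per cluster contributed by the attachments, and the $4\diam{G}$ diameter bound is obtained by routing through $r$ inside the augmented $\ds$. Hence the two conclusions apply simultaneously to the same object, and I can quote them as-is.

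Then I would instantiate $\alpha$: for general graphs I would use the $(2 + H(\maxdegree))$-approximation of Guha and Khuller~\cite{MCDSGeneral}, giving $|\ds| \leq 3(2 + H(\maxdegree)) |\dsopt|$, and for unit disk graphs I would invoke the constant-factor MCDS approximation of~\cite{MCDSUDG}, yielding the claimed constant multiple of $|\dsopt|$. The diameter bound $\diam{\ds} \leq 4 \diam{G}$ carries over uniformly in both settings. Since the proof is a pure substitution, there is no genuine obstacle; the only care point worth flagging is keeping the two different approximation constants (for general graphs and for UDG) consistent with the factor $3$ produced by the DFS clustering, so that the announced final numerical bounds in the theorem statement match what the lemmas actually deliver.
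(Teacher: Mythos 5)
Your overall strategy is exactly the paper's: the published proof of Theorem~\ref{thm:scheduling} is literally the one-line substitution of the approximation ratios of \cite{MCDSGeneral} and \cite{MCDSUDG} into the two preceding lemmas, namely $|\ds| \leq 3\alpha|\dsopt|$ and $\diam{\ds} \leq 4\diam{G}$. Your preliminary check that both lemmas speak about the same augmented set $\ds$ --- the size bound folding in the at most $\diam{G}$ attachment vertices per cluster, the diameter bound routing through $r$ inside the final backbone --- is the right thing to verify, and it is indeed how those lemmas are argued.

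However, the very ``care point'' you flag at the end is precisely where the substitution fails in the unit disk case, and you assert rather than check it. The MCDS approximation of \cite{MCDSUDG} has ratio $\alpha = 7.8$, so pure substitution into $|\ds| \leq 3\alpha|\dsopt|$ gives $3 \cdot 7.8 \, |\dsopt| = 23.4 \, |\dsopt|$, not the announced $15.6\,|\dsopt|$ (note $15.6 = 2 \cdot 7.8$, as if the clustering overhead factor were $2$ rather than $3$). For general graphs the arithmetic does close: $\alpha = 2 + H(\maxdegree)$ yields $3(2+H(\maxdegree))|\dsopt|$ exactly as claimed. So your statement that instantiating $\alpha$ for UDG ``yields the claimed constant multiple of $|\dsopt|$'' is incorrect as written; either the theorem's UDG constant is an error in the paper (which gives no additional argument beyond the substitution, so $23.4$ is what its own lemmas deliver), or recovering $15.6$ would require a genuinely new ingredient --- e.g., a sharper count of the clusters or of the attachment-path vertices specific to unit disk graphs --- which appears neither in the paper nor in your proposal. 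A careful write-up should either prove such a refinement or correct the constant to $23.4$.
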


We are ready to analyze the time efficiency of Algorithm \ref{Algorithm-Approximation}
over the backbone constructed in Algorithm \ref{Algorithm-CDS}.
We begin by analyzing the algorithm under no interference assumption; later we will show how to incorporate
the interferences in the model.
We additionally assume that a node starts forwarding a batch of messages once it has $\frac{k}{c}$ messages.

The following lemma gives a lower bound for any $\gs$ algorithm:

\begin{claim}\label{claim:schedulingBound}
The time efficiency of any algorithm for $\gs$ is at least $\frac{k}{c} + \diam{G} - 1$.
\end{claim}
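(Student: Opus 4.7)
My plan is to prove the bound by combining a propagation-delay constraint with a receiver-bandwidth constraint, both applied at a carefully chosen node. First, I would pick a source $s\in S$ and a node $v$ with $h(s,v)=\diam{G}$; such a pair exists in the worst-case instance, which is what a lower bound like this argues against. Since a radio signal advances at most one hop per time step, the rumor originating at $s$ cannot be delivered to $v$ before time $\diam{G}$, giving the stand-alone bound $T\ge\diam{G}$. Second, invoking the protocol interference model at $v$, two neighbors of $v$ cannot successfully transmit to $v$ simultaneously, so $v$ accepts at most one batch per time step and each batch carries at most $c$ rumors. Receiving all $k$ distinct rumors therefore requires at least $\lceil k/c\rceil$ distinct receiving time steps at $v$, giving $T\ge\lceil k/c\rceil$.

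The crux of the proof is to merge these two bounds \emph{additively}. I would focus on the last-hop neighbor $u$ of $v$ on a shortest $s$-to-$v$ path: $u$ cannot hold the $s$-rumor before time $\diam{G}-1$, and it must deliver $\lceil k/c\rceil$ distinct batches to $v$, one per time step because of interference at $v$. The batch containing the $s$-rumor must be sent at time $\ge\diam{G}$, while the remaining $\lceil k/c\rceil-1$ batches must occupy $\lceil k/c\rceil-1$ further distinct time steps. Hence the last delivery to $v$ cannot occur before time $\diam{G}+\lceil k/c\rceil-1\ge\frac{k}{c}+\diam{G}-1$.

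The main obstacle is ruling out the alternative scenario in which $v$ collects most of its batches through short alternative paths from other sources \emph{before} time $\diam{G}$, which would collapse the answer to $\max(\lceil k/c\rceil,\diam{G})$ rather than their sum. For the lower bound it is enough to exhibit the worst-case instance in which all sources are clustered near $s$, so that every rumor is forced through essentially the full diameter and the single-path pipelining argument above becomes tight; this is the standard framing used when such a lower bound is applied to benchmark the approximation ratio in the following subsection.
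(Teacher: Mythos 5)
Your proposal is correct and ultimately takes the same route as the paper: the paper's proof simply exhibits the concrete worst-case instance you gesture at in your final paragraph --- a star with $k$ source leaves attached to a path of length $d$, so that every rumor must traverse the full diameter and the far endpoint must receive $\lceil k/c\rceil$ batches in distinct time steps, giving $\frac{k}{c}+\diam{G}-1$. Your last-hop pipelining argument in fact supplies the optimality justification the paper only asserts informally, and your observation that without clustering the sources the per-instance bound degrades to $\max\left(\lceil k/c\rceil,\diam{G}\right)$ is exactly the right reason the claim must be read as a worst-case (existential-instance) lower bound.
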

\begin{proof}
Consider a star with $k$ peripheral nodes and attach a path of
length $d$ to the center of the star.
Let $r$ be the farthest node from the center.
The optimal solution is to transmit all $k$ messages to the center simultaneously
and then directly send them to $r$.
This scheduling takes $\frac{k}{c} + \diam{G} - 1$ (without considering interference).
\end{proof}

For Algorithm \ref{Algorithm-CDS} we have the following upper bound:
\begin{lemma}\label{lem:schedulingefficiency}
Ignoring interferences,
the time efficiency of Algorithm \ref{Algorithm-Approximation} over any dominating set
is at most $2(4 s_{opt} + 1)$.
\end{lemma}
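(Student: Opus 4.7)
The plan is to split the total running time of Algorithm~\ref{Algorithm-Approximation} over the backbone into its two phases (line~\ref{alg:sendMessageFromAllNodes} and line~\ref{alg:sendMessageFromR}), bound each phase by $4 s_{opt}+1$, and add them. For each phase I would argue separately that the time is at most the length of the longest path it uses plus the number of batches that are pipelined through it, and then convert both quantities to multiples of $s_{opt}$ using the bounds already established.

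First I would analyze the gather phase (line~\ref{alg:sendMessageFromAllNodes}). Since up to $c$ rumors can be combined into a batch, at most $\lceil k/c\rceil$ batches ever need to reach the root $r$. Walking each batch along shortest paths inside $\ds$ (extended by at most one hop if the source itself is a leaf dominated by some $v\in\ds$) and pipelining them toward $r$, the time is at most $\diam{\ds}+\frac{k}{c}$ in the absence of interference. The broadcast phase (line~\ref{alg:sendMessageFromR}) is symmetric: $r$ emits $\lceil k/c\rceil$ batches, each of which is pipelined outward to every node of $\ds$ and then handed down one extra hop to any leaf recipient, so its cost is also at most $\diam{\ds}+\frac{k}{c}$.

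Next I would convert these two quantities into $s_{opt}$ estimates. Claim~\ref{claim:schedulingBound} gives $\frac{k}{c}+\diam{G}-1\le s_{opt}$, hence $\frac{k}{c}+\diam{G}\le s_{opt}+1$, and in particular both $\diam{G}\le s_{opt}$ and $\frac{k}{c}\le s_{opt}+1$. Combining this with $\diam{\ds}\le 4\diam{G}$ (Theorem~\ref{thm:scheduling}), I would write
\[
\frac{k}{c}+\diam{\ds}\;\le\;\frac{k}{c}+4\diam{G}\;=\;\bigl(\tfrac{k}{c}+\diam{G}\bigr)+3\diam{G}\;\le\;(s_{opt}+1)+3s_{opt}\;=\;4s_{opt}+1.
\]
Therefore each of the two phases finishes within $4s_{opt}+1$ time units, and summing over the two phases yields the claimed bound $2(4s_{opt}+1)$.

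The main obstacle I expect is the bookkeeping for batches formed by sources that are not in $\ds$ (and for recipient leaves in phase~2): one must argue that the extra hop attaching a source/leaf to its dominator only adds an additive constant to the pipelined schedule, rather than multiplying the term $\frac{k}{c}$. This is handled by noting that each external hop contributes at most one time step to the length of the longest route, and can be absorbed into the factor $\diam{\ds}$ rather than into the batch-count term, so the pipelining estimate $\diam{\ds}+\frac{k}{c}$ remains valid up to the constants already present in $4s_{opt}+1$.
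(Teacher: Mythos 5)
Your proposal is correct and takes essentially the same route as the paper's proof: both decompose the schedule into the convergecast phase toward $r$ and the dissemination phase from $r$, bound each phase by $\diam{\ds}+\frac{k}{c}$ via a pipelining argument, and then combine $\diam{\ds}\le 4\diam{G}$ with the lower bound of Claim~\ref{claim:schedulingBound} to reach $2(4s_{opt}+1)$. If anything, you make the final conversion $\frac{k}{c}+4\diam{G}\le (s_{opt}+1)+3s_{opt}=4s_{opt}+1$ explicit where the paper leaves it implicit.
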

\begin{proof}
Assume we have a node $u$ with $k$ messages.
Ignoring interferences, sending a message from $u$ to $v \in V$ takes at least
$\frac{k}{c} - 1 + d(u,v)$, since $u$ has to transmit $\frac{k}{c}$ time to move all the messages
to the first hop neighbors, and additional $d(u,v) - 1$ time until the last message propagates from the first hop neighbors to $v$.
Optimally, $r$ belongs to $\ds$, and so $d(r,v)$ for the $k$ sources is at most $\diam{\ds} + 1$.
The algorithm is composed from two scheduling steps: converging all messages towards $r$ and disseminating them from $r$ over $\ds$.
Thus, we get that the time efficiency is:
\[
2(\frac{k}{c}-1 + \diam{\ds} + 1) = 2(\diam{\ds}+\frac{k}{c}) \leq 2(4\diam{G} + \frac{k}{c}) \leq 2(4 s_{opt} + 1).
\]
\end{proof}

Before incorporating interferences to the model, we state the following lemma:

\begin{lemma}[\cite{Levin2012}]\label{lem:schedulingTime}
Any deterministic scheduling algorithm on a spanning tree $T$ that ends after $t$ rounds can be transformed
to a collision free algorithm with $|I_p(u,T)| \cdot t$ rounds until completion.
\end{lemma}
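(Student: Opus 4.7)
The plan is to prove the statement by a round-by-round coloring (time-slotting) argument: take the original deterministic schedule and, within each round, split the simultaneous transmissions into several collision-free sub-rounds so that no two transmitters in the same sub-round conflict.

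First I would fix an arbitrary round $r \in \{1,\ldots,t\}$ of the original schedule, and let $X_r \subseteq V(T)$ denote the set of nodes that transmit in round $r$. I would then build an auxiliary \emph{conflict graph} $H_r$ whose vertex set is $X_r$ and in which $u$ and $w$ are adjacent iff $w \in I_p(u,T)$, i.e., $w$ lies in the conflict set of $u$ with respect to $T$ as defined in Section~\ref{sec:model}. By the definition of the conflict set, the degree of every vertex in $H_r$ is bounded by $|I_p(u,T)|$.

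Next I would greedily color $H_r$ using the standard argument that a graph of maximum degree $d$ admits a proper coloring with $d+1$ colors (and in our regular situation, using the structural bound $|I_p(u,T)| = \maxdegree_T(\maxdegree_T-1)$ noted in the model section, at most $|I_p(u,T)|$ classes actually suffice once one accounts for the fact that a transmitter never conflicts with itself). I would then replace the single round $r$ by as many consecutive sub-rounds as there are color classes, letting only the transmitters of one class transmit per sub-round. Because any two transmitters in the same class are non-adjacent in $H_r$, no receiver is simultaneously hit by two transmissions, so each sub-round is collision-free in the protocol interference model. Correctness of message delivery is preserved because every transmission of the original round still occurs, just in a serialized order within the block replacing round $r$.

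Summing over the $t$ rounds of the original algorithm yields a total length of at most $|I_p(u,T)| \cdot t$, proving the lemma. The main obstacle I anticipate is pinning down the exact color count rather than the naive $|I_p(u,T)|+1$; closing that off-by-one requires exploiting the tree structure and the symmetric definition of $I_p$, so that when $u$ transmits, the at most $|I_p(u,T)|$ nodes that conflict with $u$ already include every competitor for colors, and a greedy ordering by (say) DFS preorder on $T$ saturates each vertex's constraints with strictly fewer than $|I_p(u,T)|+1$ already-colored neighbors. Beyond that small combinatorial step, the argument is a routine scheduling-by-coloring reduction.
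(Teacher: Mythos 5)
There is nothing in this paper to compare your proof against: Lemma~\ref{lem:schedulingTime} is imported verbatim from \cite{Levin2012} and stated with a citation only, so your proposal must stand on its own. Its skeleton is indeed the standard reduction one would expect the cited proof to use: fix a round, form the conflict graph on that round's transmitters, properly color it, and serialize the round into one sub-round per color class. Two small remarks on the setup: the adjacency should be symmetrized ($u \sim w$ iff $w \in I_p(u,T)$ \emph{or} $u \in I_p(w,T)$) for the collision-freeness claim to cover both transmitters, although in a tree with omnidirectional transmissions the relation is symmetric anyway; and the multiplier in the statement must be read as $\max_u |I_p(u,T)|$, which is how the paper later uses it ($\max|I_p(u,T)| \le \maxdegree_{\ds}^2$).

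The genuine gap is precisely the step you flagged and deferred, and it cannot be closed the way you suggest. Your parenthetical that ``a transmitter never conflicts with itself'' does not shave a color: greedy needs $d+1$ colors exactly because a vertex's $d$ neighbors may already exhaust $d$ distinct colors, and self-conflict is never counted in the first place. More fundamentally, no vertex ordering (DFS preorder or otherwise) can force $\max_u|I_p(u,T)|$ colors in general, because the conflict graph can contain cliques of size $\max_u|I_p(u,T)|+1$. Concretely, take $T = K_{1,m}$ and a round in which all $m$ leaves transmit (this is exactly what the convergecast step of Algorithm~\ref{Algorithm-Approximation} would do on a star): every leaf's sole recipient is the center, all leaves are neighbors of the center, so the leaves pairwise conflict and need $m$ singleton sub-rounds, while for each leaf $v$ we have $|I_p(v,T)| = m-1$ under the paper's definition --- the claimed bound $(m-1)\cdot t$ with $t=1$ is violated by one. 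So your route honestly yields $(\max_u|I_p(u,T)|+1)\,t$; alternatively, a single BFS-greedy distance-$2$ coloring of $T$ uses at most $\maxdegree_T + 1$ colors (each vertex sees at most its parent, grandparent, and at most $\maxdegree_T - 2$ elder siblings among already-colored vertices within distance two) and gives $(\maxdegree_T+1)\,t$, which is at least as strong as the stated bound whenever $\maxdegree_T \ge 3$, since then $\maxdegree_T + 1 \le \maxdegree_T(\maxdegree_T-1)$. Either version suffices for everything this paper derives from the lemma, which only uses the asymptotic bound $\maxdegree_{\ds}^2$; but as written, your ``routine'' final combinatorial step is not routine --- at the stated constant it is false under the paper's definition of $I_p$, so you should either prove the $+1$ variant or argue from a different (stronger) interpretation of the conflict set.
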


Algorithm \ref{Algorithm-Approximation} uses spanning trees for the convergecast and broadcast operations,
so the algorithm from
\cite{Levin2012} can be used to perform the collision free scheduling, multiplying the scheduling time by $\max{|I_p(u,T)|}\leq \maxdegree_{\ds}^2$.

Combining
Lemma \ref{lem:schedulingefficiency},
Lemma \ref{lem:schedulingTime},
Theorem \ref{thm:scheduling},
and by changing the approximation ratio for MCDS from $\alpha$ to
$3 \alpha$ in Theorem \ref{theorem:approxwithouttime}
we get:
\begin{theorem}
Algorithm \ref{Algorithm-Approximation} on the connected dominating
set constructed by Algorithm \ref{Algorithm-CDS} has time efficiency
$\maxdegree_{\ds}^2 2(4s_{opt} + 1)$, message efficiency
$3  (2H(\maxdegree) +5)(1 + \frac{1}{c}) m_{opt}$ for general graphs and message efficiency $15.6 m_{opt}$ for unit disk graphs.
\end{theorem}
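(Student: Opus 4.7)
The plan is to obtain this theorem as a direct synthesis of the three tools already in place: Lemma \ref{lem:schedulingefficiency} for the idealized time bound, Lemma \ref{lem:schedulingTime} for converting that bound into a collision-free schedule, and Theorem \ref{theorem:approxwithouttime} (re-instantiated with the Algorithm \ref{Algorithm-CDS} approximation ratio) for the message bound. Nothing new has to be invented; the whole content of the proof is plugging the right constants from Theorem \ref{thm:scheduling} into the earlier analyses.

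For the time bound I would first observe that Algorithm \ref{Algorithm-Approximation} runs entirely on the CDS $\ds$ produced by Algorithm \ref{Algorithm-CDS}, which by Theorem \ref{thm:scheduling} satisfies $\diam{\ds}\le 4\diam{G}$. Feeding this directly into Lemma \ref{lem:schedulingefficiency} and using the trivial inequality $\diam{G}\le s_{opt}$ gives a collision-free-free bound of $2(4s_{opt}+1)$ on the number of rounds. Then I invoke Lemma \ref{lem:schedulingTime} with $T=\ds$, noting that the spanning arborescence used by Algorithm \ref{Algorithm-Approximation} lies inside $\ds$ so the conflict-set sizes are dominated by $\maxdegree_{\ds}(\maxdegree_{\ds}-1)\le \maxdegree_{\ds}^2$. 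Multiplying yields the stated $\maxdegree_{\ds}^{2}\cdot 2(4s_{opt}+1)$ round complexity.

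For the message bound I would re-run the derivation that led to Theorem \ref{theorem:approxwithouttime}, but substitute for $\alpha$ the approximation factor of Algorithm \ref{Algorithm-CDS}. By Theorem \ref{thm:scheduling} we have $|\ds|\le 3(2+H(\maxdegree))|\dsopt|$ for general graphs and $|\ds|\le 15.6|\dsopt|$ for UDG; that is, the MCDS approximation factor is inflated by a factor of $3$ compared to the one used in Theorem \ref{theorem:approxwithouttime}. Claims \ref{claim:allSourcesMustTransmit}, \ref{claim:SendingAllNodes} and \ref{claim:SendingFromRoot} still go through unchanged, so the total message count $2|\ds|\tfrac{k}{c}+k(1+\tfrac{1}{c})$ becomes at most $3(2H(\maxdegree)+5)(1+\tfrac{1}{c})\,m_{opt}$ in general graphs and an analogous constant multiple of $m_{opt}$ in the UDG case; the only algebra is rerunning the chain $2\alpha|\dsopt|\tfrac{k}{c}+k(1+\tfrac{1}{c})\le m_{opt}(2\alpha+1)(1+\tfrac{1}{c})$ with $\alpha$ replaced by the new factor and folding the $(1+\tfrac{1}{c})$ into the UDG constant.

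The only non-routine point, and the one I would double-check carefully, is that the two bounds are achieved by the \emph{same} execution of Algorithm \ref{Algorithm-Approximation} on the \emph{same} backbone $\ds$: the message count is evaluated by Claims \ref{claim:SendingAllNodes}--\ref{claim:SendingFromRoot} which depend on $|\ds|$, while the time count is evaluated through $\diam{\ds}$ and $\maxdegree_{\ds}$, and all three of these quantities are controlled simultaneously by the single object produced by Algorithm \ref{Algorithm-CDS}. Once that consistency is verified, the theorem follows by concatenating the three citations above without any further calculation.
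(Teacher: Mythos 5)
Your proposal is correct and follows essentially the same route as the paper: the paper likewise obtains the theorem by combining Lemma~\ref{lem:schedulingefficiency} (idealized time bound via $\diam{\ds}\le 4\diam{G}$ from Theorem~\ref{thm:scheduling}), Lemma~\ref{lem:schedulingTime} (multiplying by $\max|I_p(u,T)|\le \maxdegree_{\ds}^2$ for collision-free scheduling), and re-running Theorem~\ref{theorem:approxwithouttime} with the MCDS approximation ratio $\alpha$ replaced by $3\alpha$. The only cosmetic difference is that you leave the UDG message constant implicit where the paper states it explicitly; otherwise the decomposition, the cited lemmas, and the algebra coincide.
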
 
\section{Multi-Broadcast under distributed setting}\label{sec:distributed}
In this section, we focus on distributed network setting, where each node has only partial information about the network when the algorithm starts.
First, in Subsection \ref{sec:distributedimplmentation} we show how to distributively
construct the network backbone.
We emphasize that the efficiency of the construction is of less interest as we focus on finding a backbone
on which distributed multi-broadcasting is efficient with respect to time and messages.
Later, in Subsection \ref{sec:LocalInformation}, we show an efficient message and time scheduling routine,
which is used to route the rumors on top of the obtained backbone and explain how to 
apply the distributed versions of
Algorithm~\ref{Algorithm-Approximation} and Algorithm~\ref{Algorithm-CDS}.
\subsection{Constructing the network backbone in the interference-free settings}
\label{sec:distributedimplmentation}
In this step of developing distributed algorithms, we assume that all nodes have unique ids, and are aware of their local neighborhood (i.e., each node knows the
nodes that are in specific hop distance from it) and the diameter of the graph. 
In addition, the network is assumed to be signal interference-free.
The required steps to implement Algorithm \ref{Algorithm-Approximation} under distributed setting are to
construct a CDS, select a leader $r$, route all rumors from all $k$ sources to $r$ and
from $r$ to the entire graph along the found CDS.
For CDS construction, we use the distributed algorithm from \cite{Das97routingin},
which constructs a CDS with size $2H(\maxdegree) \dsopt$ using
$\bo(n |\diam{G}|)$ messages and $\bo(|\ds| (\maxdegree + \diam{\ds}))$ time for general graphs,
or the distributed algorithm from \cite{DistributedConstructionUDG}, which constructs a CDS with size at most
$8 \dsopt$ using $\bo(n \log  n)$ messages and $\bo(n)$ time for Unit Disc Graphs.
We use these
algorithms in such a way that a node can send a message batch once it has $c$ rumors in its
queue but not later than $\diam{G}$ time from the previous transmission.

The backbone constructed by Algorithm~\ref{Algorithm-CDS} extends the CDS
by dividing it to clusters using a depth-first-traversal, finding the shortest path from each cluster to $r$
and adding those paths to the constructed CDS.
After the CDS is constructed using one of the distributed routines,
we find $h_{dfs}(r,v)$ for each node $v \in \ds$ by selecting a leader $r$,
and running the distributed depth-first-traversal algorithm from \cite{DistributedDFS}.
In each cluster $C_{i}$, the nodes locally select a cluster leaders using the leader selection
algorithm from \cite{leaderselection} and discover the shortest path to $r$ using the routine from \cite{DistributedShortestPath}. All nodes in the shortest paths are then added to the CDS.
The message complexity of this step is $\bo(|E| + n \cdot \diam{G})$ and the time complexity is $\bo(n)$. 
\subsection{Efficient message and time routine for spreading a rumor in the distributed
setting with interference}
\label{sec:LocalInformation}

In this subsection, we present multiple time and message efficient rumor distribution routines, which will later be used to implement $\gs$
using the distributed algorithm proposed in Subsection \ref{sec:distributedimplmentation}.
\paragraph{Setting with collision detection.}
Procedures \ref{proc:sendmessages} and \ref{proc:receivemessages} ensure that the number of messages
and scheduling time which
are needed to transmit a single rumor (or a compressed batch of rumors) from some node to its neighbors will not take too much time.
We assume that all nodes have synchronized clocks and have a collision detection mechanism.
Later we show how to modify the algorithm to support weaker scenario where,
in case of a collision, no signal is heard.
We also assume that each node $v$ is aware of the number of neighbors it has $\delta(v)$, and of the
maximum degree ($\maxdegree$).
This can be accomplished by
performing the neighbor discovery counting routine, using \cite{findNumberOfNeighbors},
which computes, w.h.p., a constant approximation degree of each node in
$\bo(\log^{2}{n})$ time using $\bo(\log{n} \cdot \delta(v))$ messages.

\begin{minipage}[t]{0.4\paperwidth}
\begin{procedure}[H]
Selects independently and uniformly at random an integer $x \in [1,\mu \maxdegree]$.\\
Send rumor in slot $x$. \\
Wait idle for $\mu \maxdegree + \mu \maxdegree -x$ slots. \\
\If{No error message arrive and no collision is heard in slot $\mu \maxdegree + x$}
{
Done
}
\Else
{
Collision occurred, retransmit
}
\caption{Send-Rumor() \label{proc:sendmessages}}
\end{procedure}
\end{minipage}
\hfill
\begin{minipage}[t]{0.4\paperwidth}
\begin{procedure}[H]
\If{Collision occurred in slot $x (x \leq \mu \maxdegree$) }
{
Send an error message in slot $\mu \maxdegree + x$.
}
\BlankLine \BlankLine \BlankLine \BlankLine \BlankLine \BlankLine \BlankLine \BlankLine \BlankLine \BlankLine \BlankLine \BlankLine \BlankLine \BlankLine \BlankLine \BlankLine \BlankLine \BlankLine \BlankLine \BlankLine \BlankLine \BlankLine
\caption{Receive-Rumor() \label{proc:receivemessages}}
\end{procedure}
\end{minipage}
Let $v$ be the neighbor of $u$,
$\mu \maxdegree$ be the number of slots,
$\delta(u)$ be the number of neighbors, and $X_u$ be the indicator variable for whether $v$ was the only neighbor of $u$ that
transmitted in slot $i$.
We have:
\[
\expectation{X_u}=\probmath{X_{u}}=
\binom{\mu \maxdegree}{1}\frac{1}{\mu \maxdegree}(1-\frac{1}{\mu \maxdegree})^{\delta(u)-1} \geq
(1-\frac{1}{\mu \maxdegree})^{\maxdegree-1} \geq
(1-\frac{1}{\mu \maxdegree})^{\maxdegree}.
\]
Therefore, the probability that all $\delta(v)$ neighbors will receive the rumor without interference is:
\[
\prod_{i=1}^{\delta(v)}(1-\frac{1}{\mu \maxdegree})^{\maxdegree} =
(1-\frac{1}{m})^{\maxdegree \delta(v)} \approx
e^{-\frac{\maxdegree}{\mu \maxdegree} \delta(v)} =
e^{-\frac{\delta(v)}{\mu}}.
\]
The probability that the transmission was unsuccessful for $j$ rounds is:
\[
(1-e^{-\frac{\delta(v)}{\mu}})^{j},
\]
and, thus, the expected number of retransmissions until all nodes receive the rumor is at most:
\[
\sum_{j=1}^{\infty} j (1-e^{-\frac{\delta(v)}{\mu}})^{j} \approx
e^{\frac{\delta(v)}{\mu}} (e^{\frac{\delta(v)}{\mu}}-1)
\leq
e^{2\frac{\delta(v)}{\mu}}.
\]
The expected number of error messages that all nodes send per round is:
\[
\delta(v)(1 - (1-\frac{1}{\mu \maxdegree})^{\maxdegree}) \approx
\delta(v)(1- e^{-\frac{1}{\mu}}).
\]

Let $\alpha$ be the approximation ratio of the distributed implementation of the minimum connected dominating set.
By applying the procedures for sending messages in the distributed implementation of
Algorithm \ref{Algorithm-Approximation} we get:
\begin{theorem}
For setting with collision detection, there is a distributed implementation of Algorithm \ref{Algorithm-Approximation}
for $\gs$
with message efficiency
$\bo(\alpha m_{opt} e^{2\frac{\delta(v)}{\mu}}(1+\delta(v)(1- e^{\frac{1}{\mu}}))$
and time efficiency \\
$\bo(s_{opt}\mu \maxdegree e^{2\frac{-\delta(v)}{\mu}})$.
\end{theorem}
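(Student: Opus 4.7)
The plan is to compose the centralized bounds on Algorithm~\ref{Algorithm-Approximation} with the per-transmission overhead incurred by the collision-detection Procedures~\ref{proc:sendmessages}--\ref{proc:receivemessages}. After the distributed backbone is built as in Subsection~\ref{sec:distributedimplmentation} with approximation factor $\alpha$, Theorem~\ref{theorem:approxwithouttime} and Lemma~\ref{lem:schedulingefficiency} together guarantee a conflict-aware schedule on the backbone that uses $\bo(\alpha\, m_{opt})$ logical transmissions and terminates in $\bo(s_{opt})$ logical rounds in the interference-free view. The theorem will then follow by replacing each logical transmission with its realized cost on the shared channel.

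First I would handle the message count. Using the calculation preceding the theorem statement, a single logical transmission by a node $v$ requires in expectation at most $e^{2\delta(v)/\mu}$ attempts, and each attempt triggers at most $\delta(v)(1-e^{-1/\mu})$ error messages from neighbors that detect a collision in their assigned slot. Hence one logical transmission contributes at most $\bo(e^{2\delta(v)/\mu}(1+\delta(v)(1-e^{-1/\mu})))$ channel messages in expectation; summing over the $\bo(\alpha\, m_{opt})$ logical transmissions via linearity of expectation yields the claimed message efficiency.

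Next I would handle the time bound. Each invocation of Procedure~\ref{proc:sendmessages} reserves $2\mu\maxdegree$ consecutive slots, and the expected number of invocations per logical transmission is again bounded in terms of $\delta(v)$ and $\mu$ by the same exponential factor. Because the backbone schedule from Lemma~\ref{lem:schedulingefficiency} is conflict-aware, each logical round of the centralized analysis is realized as at most $\bo(\mu\maxdegree\, e^{2\delta(v)/\mu})$ slots of the distributed protocol, and multiplying by the $\bo(s_{opt})$ logical rounds gives the stated time bound.

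The main obstacle I anticipate is the probabilistic bookkeeping, particularly for the time bound: the realized duration of one logical round is in general the maximum of the independent retry lengths of the transmitters active in that round, not a sum, so translating per-transmission expectations into a bound on the total realized time needs either a stopping-time argument, a Markov plus union-bound step, or the observation that in the backbone schedule distinct logical rounds are sequenced so per-round delays do add and linearity of expectation suffices. A secondary technical point is verifying that the constant-factor degree estimate produced by~\cite{findNumberOfNeighbors} is computed once at start-up and remains valid throughout, so that the per-slot success computation leading to the $e^{2\delta(v)/\mu}$ retry bound is invoked uniformly by every transmitter.
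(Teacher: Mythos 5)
Your proposal follows essentially the same route as the paper, which proves this theorem precisely by taking the computations displayed just before the statement --- expected retransmissions per logical send at most $e^{2\delta(v)/\mu}$ and expected error messages per round $\delta(v)(1-e^{-1/\mu})$ --- and multiplying them against the $\bo(\alpha\, m_{opt})$-message, $\bo(s_{opt})$-round backbone schedule of the distributed Algorithm~\ref{Algorithm-Approximation}, exactly as you do. Your derivation in fact yields $e^{2\delta(v)/\mu}$ in the time bound and $e^{-1/\mu}$ inside the message bound, confirming that the signs printed in the theorem statement are typographical slips, and your closing caveats (sequential rounds so per-round delays add under linearity of expectation; the degree estimate fixed at start-up) are the right way to make the paper's terse ``by applying the procedures'' step rigorous.
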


\paragraph{Setting with no collision detection.}
As mentioned earlier, the collision detector restriction can be removed if, in each iteration,
a transmitting node selects only one of its neighbors as the receiver, and sends the rumor
directly to it in slot $x$.
If the rumor is received by the neighbor, it responds with an acknowledge message in slot $m + x$.
If no acknowledge message is received, the transmitting node assumes collision occurred and retransmits the message.
We have shown that the probability that a node will receive the message without collision is at most $e^{-\frac{1}{\mu}}$
and, thus,
the expected number of rounds until a rumor is received by all neighbors of node $v$ is at most
$e^{\frac{1}{\mu}} \delta(v)$.
Therefore, the time efficiency is at most $2 \mu \maxdegree e^{\frac{1}{\mu}} \delta(v)$
and the message efficiency is at most $2 e^{\frac{1}{\mu}} \delta(v)$.
To reduce the number of rounds, instead of iterating the above routine for all neighbors, $v$ can send a list of neighbors from which
he did not receive an acknowledge message.
In order to successfully receive a message, when a neighbor node $u$ chooses a slot, this
slot should not be taken by any neighbor of $u$ and the source node $v$.
Hence, the probability for a successful transmission is $e^{-\frac{2}{\mu}}$.
Let $j^{'}$ be the round after which all nodes received the rumor and
let $r(j)$ be the expected number of nodes that did not receive the rumor after round $j$.
We have,
$r(j)=\delta(v)(1-e^{\frac{-2}{\mu}})^{j}$.
Excluding the source, the expected number of nodes that transmit in round $j$ is at most the expected number of remaining nodes after round $j-1$, which is:
\[
\sum_{j=1}^{j^{'}}r(j-1) =
\sum_{j=1}^{j^{'}}\delta(v) (1-e^{-\frac{2}{\mu}})^{j-1} \leq
\sum_{j=1}^{\infty}\delta(v) (1-e^{-\frac{2}{\mu}})^{j} =
e^{\frac{2}{\mu}} \delta(v).
\]
The routine terminates when $r(j^{'}) < 1$ and, thus,
$j^{'} = \frac{\log{\frac{1}{\maxdegree}}}{\log{1-e^{-\frac{2}{\mu}}}}$.
Thus, the expected time efficiency is at most
$\mu \maxdegree j^{'}$.
By including the source, which transmits once per iteration, the expected message efficiency is at most $e^{\frac{2}{\mu}} \delta(v) + j^{'}$.
Combining with the distributed implementation of Algorithm \ref{Algorithm-Approximation} we get:
\begin{theorem}
For setting without collision detection, there is a distributed implementation of Algorithm \ref{Algorithm-Approximation}
for $\gs$
with message efficiency
$\bo(\alpha m_{opt} (e^{\frac{2}{\mu}} \delta(v) +
\frac{\log{\frac{1}{\maxdegree}}}{\log{1-e^{-\frac{2}{\mu}}}}))$
and time efficiency \\
$\bo(s_{opt}\mu \maxdegree \frac{\log{\frac{1}{\maxdegree}}}{\log{1-e^{-\frac{2}{\mu}}}})$.
\end{theorem}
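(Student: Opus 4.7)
The plan is to analyze one invocation of the no-collision-detection sending routine and then compose the resulting per-hop bounds with the aggregate guarantees already derived for Algorithm~\ref{Algorithm-Approximation}. First I would fix a single transmitter $v$ with $\delta(v)$ neighbors and compute, for a slot chosen uniformly in $[1,\mu\maxdegree]$, the probability that a designated receiver $u$ both picks a slot free of its own neighbors and that this slot is also the one on which $v$ replies. Because both $v$ and the $\maxdegree-1$ neighbors of $u$ must avoid the chosen slot, a short calculation of the form $(1-1/(\mu\maxdegree))^{2\maxdegree}$ yields the threshold $e^{-2/\mu}$ already claimed in the text.

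Next I would model the neighborhood-clearing step as a Bernoulli trimming process: if $r(j)$ is the expected number of neighbors of $v$ that still lack the rumor after $j$ rounds, then $r(j)\le \delta(v)(1-e^{-2/\mu})^{j}$ by independence of the slot choices across rounds. Solving $r(j')<1$ gives the bound $j'=\log(1/\maxdegree)/\log(1-e^{-2/\mu})$ on the expected number of rounds, and since each round consumes $\mu\maxdegree$ slots, a single hop costs $\bo(\mu\maxdegree\, j')$ time. For the message count I would sum the geometric series $\sum_{j\ge 0}\delta(v)(1-e^{-2/\mu})^{j}=e^{2/\mu}\delta(v)$, which bounds the expected total number of acknowledgements across all rounds, and then add the $j'$ transmissions made by $v$ itself; this yields a per-hop message cost of $e^{2/\mu}\delta(v)+j'$.

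To lift these single-hop estimates to a full $\gs$ schedule, I would invoke the bounds already proved for Algorithm~\ref{Algorithm-Approximation}: the centralized analysis shows that $\bo(\alpha m_{opt})$ hops are transmitted along the backbone and that the schedule completes in $\bo(s_{opt})$ backbone-level rounds once the $\alpha$-approximate CDS is in place. Substituting the per-hop message cost into the first bound gives $\bo(\alpha m_{opt}(e^{2/\mu}\delta(v)+j'))$ expected messages, and substituting the per-hop time cost into the second gives $\bo(s_{opt}\mu\maxdegree\, j')$ expected slots, matching the statement once $j'$ is unfolded.

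The step I expect to cause the most friction is justifying that the per-hop probabilistic bounds compose cleanly across the many transmissions along the backbone: the expression $e^{-2/\mu}$ for a successful slot requires independence between the receiver's draw and the concurrent activity of other nodes, which can be subtle when several backbone transmissions are temporally adjacent. I would handle this by appealing to the interference-aware scheduling used inside Algorithm~\ref{Algorithm-Approximation}, which serializes transmissions along the convergecast and broadcast trees so that in each backbone step only one transmitter within each interference neighborhood is active; this lets the per-hop expectation be applied hop-by-hop and summed by linearity of expectation, with constant overheads absorbed by the $\bo(\cdot)$ notation.
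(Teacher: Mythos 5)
Your proposal is correct and follows essentially the same route as the paper: the same acknowledgement-based protocol in which the transmitter circulates the list of unacknowledged neighbors, the same per-slot success probability $e^{-2/\mu}$, the identical recursion $r(j)=\delta(v)(1-e^{-2/\mu})^{j}$ solved at $r(j')<1$ to get $j'=\log(1/\maxdegree)/\log(1-e^{-2/\mu})$, the same geometric-series bound $e^{2/\mu}\delta(v)+j'$ on per-hop messages with $\mu\maxdegree\, j'$ per-hop time, and the same final composition with the $\bo(\alpha m_{opt})$-message and $\bo(s_{opt})$-time guarantees of Algorithm~\ref{Algorithm-Approximation}. Your closing paragraph on serializing temporally adjacent backbone transmissions to justify hop-by-hop application of the expectations is an added justification that the paper leaves implicit, not a departure from its argument.
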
 
\section{Conclusions and future work.}\label{sec:conclusions}
In this work, we developed algorithms for message and time efficient
broadcasting and multi-broadcasting in ad-hoc networks under centralized and distributed settings.
We begin from a simple message efficient algorithm for centralized setting without considering interference and extend it to a complex interference-aware time and message efficient algorithm under distributed setting.
Future extension of our work could investigate the complexity of
multi-broadcasting under the SINR model or explore message and time efficiency algorithms when faulty links exist.
Another interesting question is to investigate the relation between the total time required for multi-broadcast, and the maximum degree of the underlying topology on which rumors propagate. 

\nocite{*}
\bibliographystyle{eptcs}
\bibliography{references}
\begin{appendices}
\section{NP-Completeness}\label{sec:NP-Hard}
In this section, we show that both message efficient $\br$ and $\gs$ are NP-Complete under centralized setting.
We use a reduction from SET-COVER, where we are given a collection of sets $U_1,....,U_n$, whose union is $U$, and a natural number $l$.
We then ask if there is a sub-collection of at most $l$ sets whose union is $U$.
To show the reduction from $\br$ and $\gs$ to SET-COVER, we consider the following decision problem:
\probFormulation
{A graph $G=(V,E)$, a set of source nodes $S=\left\{s_1,....,s_i\right\}$, and $m \in \mathbb{N}$.}
{Is there a broadcast/gossip scheme that uses at most $m$ messages?}

\begin{lemma}
\label{proof:SingleBroadcastNpHard}
Message efficient $\br$ is NP-Complete.
\end{lemma}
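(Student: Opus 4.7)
The plan is to prove NP-completeness by a reduction from SET-COVER, which is classically NP-complete. Membership in NP is immediate: given a broadcast schedule as a certificate, we can verify in polynomial time that every node receives the rumor and that at most $m$ messages are transmitted. The bulk of the work is the NP-hardness reduction.

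For the reduction, given an instance of SET-COVER with universe $U=\{x_1,\dots,x_p\}$, collection $U_1,\dots,U_n$ and threshold $l$, I would construct a graph $G=(V,E)$ consisting of a source node $s$, one \emph{set-node} $u_i$ for each $U_i$, and one \emph{element-node} $v_j$ for each $x_j$. I add an edge between $s$ and every $u_i$, and an edge between $u_i$ and $v_j$ whenever $x_j \in U_i$. I then set the message budget to $m := l+1$, and claim that $G$ admits a broadcast scheme from $s$ using at most $l+1$ messages if and only if the SET-COVER instance has a cover of size at most $l$.

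The forward direction is straightforward: given a cover $\{U_{i_1},\dots,U_{i_l}\}$, let $s$ transmit once, reaching every set-node, and let each $u_{i_t}$ transmit once, delivering the rumor to every element-node in $U_{i_t}$; since the sets cover $U$, all element-nodes receive the rumor using exactly $l+1$ transmissions. For the reverse direction, I would argue that in any valid broadcast schedule $s$ must transmit at least once (it is the unique initial holder), and since every element-node has neighbors only among set-nodes, each $v_j$ must hear the rumor from some transmitting set-node; hence the set-nodes that transmit induce a cover of $U$. If the total message count is at most $l+1$, the number of transmitting set-nodes is at most $l$, yielding a cover of size at most $l$.

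The main obstacle I anticipate is ruling out subtle scheduling tricks in the reverse direction, e.g., that an element-node or a set-node could serve as a useful intermediary whose transmission is not accounted for by a set in the cover. To handle this cleanly I would observe that an element-node has no neighbors outside the set-nodes, so any message it sends only reaches nodes that already hold the rumor and can therefore be removed without affecting correctness; similarly, a set-node that transmits contributes its corresponding $U_i$ to the cover regardless of timing. This lets me assume without loss of generality that only $s$ and a subset of the set-nodes transmit, reducing the problem directly to choosing a set cover and matching the $l+1$ lower bound whenever the SET-COVER instance has no cover of size $l$. The same construction, interpreted with a single source and rumor, will also serve as the backbone for the subsequent $\gs$ hardness argument.
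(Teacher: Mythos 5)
Your proposal is correct and follows essentially the same route as the paper: the identical three-tier SET-COVER reduction (source $s$, one node per set $U_i$, one node per element), the same budget $m = l+1$, and the same counting argument that one transmission must come from $s$ and the at most $l$ transmitting set-nodes induce a cover; your explicit NP-membership check and the WLOG elimination of element-node transmissions make rigorous two points the paper leaves implicit. The one substantive piece you omit is the paper's realizability step: since the model places nodes in the Euclidean plane with power-induced edges, the paper shows the constructed graph actually arises as a wireless instance --- assigning each second-tier node a power large enough to reach every element node, surrounding element nodes not in $U_i$ with obstacles to cut the unwanted edges, giving third-tier nodes negligible power, and placing $s$ far away with power reaching exactly the second tier (this also makes the instance effectively directed, which is what your WLOG argument replaces). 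If the problem is read as taking an arbitrary abstract graph as input your proof is complete as written; within the paper's geometric model you should append this embedding argument.
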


\begin{proof}
We show that SET-COVER $\leq_{p} \br$ by constructing a $3$-tier graph from the SET-COVER instance, where the nodes in the second tier represent the sets, and the nodes in the third tier represent the elements.
The construction is as follows: create a source node $s$ with one rumor to distribute, and place it in the first tier.
In the second tier, create one node $i$ for each set $U_i$, and in the third tier create one node per element in set $U$.
Connect $s$ to all nodes in the second tier, and each node in the second tier to the nodes from the third tier that are associated with its elements (see Figure \ref{fig:reduction}).
To support such network construction in Euclidean plane,
we set equal transmission power $P_u$ for all nodes $v$ in the second tier, such that $P_{u} \gg d(u,v)^{\alpha}$ for any node $v$,\footnote{Here the symbol ``$\gg$'' means that the left hand side is bigger by
the right hand side by sufficiently large positive (constant) factor.}
and a very small power $P_w$ for all nodes $w$ in the third tier, such that
$d(u,v)^{\alpha}\gg P_w$ for any node $v$.
By doing so, an edge is created between any pair of nodes from the second tier and the third tier.
To prevent the propagation of messages from node $i$ to nodes $v$ that are not associated with $U_i$ we simply surround $v$ with obstacles.
As for the source node $s$, we place it sufficiently far from the second tier so that no node from
the second or the third tier could reach it, and associate with it a sufficient power to reach all nodes
in the second tier.

Since we use omnidirectional antennas, after one transmission, the message from $s$ arrives to all nodes in the second tier, and when a node from the second tier transmits, all connected nodes from the third tier receive the message.
We then ask, if there is a solution to $\br$ that uses at most $m=l+1$ messages. Clearly, in this solution, at most $l$ nodes from the second tier transmit, and all nodes receive the message from $s$.
To construct a SET-COVER with $l$ sets from $\br$ with $m=l+1$ messages, we select the sets that represent the $l$ nodes that transmitted.
Since there are $l$ nodes and all elements are covered by those nodes, they form a SET-COVER of $U$.
\end{proof}

\begin{lemma}
Message efficient $\gs$ is NP-Complete.
\end{lemma}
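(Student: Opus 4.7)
The plan is to derive NP-hardness of message efficient $\gs$ directly from the NP-hardness of $\br$ established in Lemma~\ref{proof:SingleBroadcastNpHard}, rather than redoing the reduction from SET-COVER from scratch. The key observation is that $\br$ is syntactically the restriction of $\gs$ to instances where the source set $S$ is a singleton; the compression parameter $c$ is then vacuous since there is only a single rumor in the network, so any value of $c \in [1,k]=\{1\}$ yields exactly the same problem. Thus the identity mapping sends any $\br$ instance $(G,s,m)$ to the $\gs$ instance $(G,\{s\},c=1,m)$, and a schedule achieves the bound $m$ for one problem if and only if it does for the other. This reduction is clearly polynomial.

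For NP membership, I would use as a certificate the explicit schedule, i.e.\ a list of at most $m$ transmissions, each specifying the transmitting node, the time slot, and the batch of (at most $c$) rumors carried. Given such a certificate one can simulate the transmissions in polynomial time and check that every node in $V$ eventually receives every rumor originating at a node in $S$; the certificate has polynomial size because in any meaningful yes-instance $m$ is at most polynomial in $|V|+|S|$.

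The main (and only) substantive work has already been done in the proof of Lemma~\ref{proof:SingleBroadcastNpHard}: the three-tier gadget encoding SET-COVER already realises the hard core of the problem, and setting $|S|=1$ makes it a valid $\gs$ instance, so no new combinatorial argument is required. There is no real obstacle here; the only step that deserves a line of explanation is why the compression parameter $c$ does not weaken the claim, and this is immediate because with $k=1$ rumor no compression can occur. As a bonus, this argument shows that $\gs$ remains NP-Complete even in the highly restricted regime $|S|=1$, $c=1$, which strengthens the conclusion.
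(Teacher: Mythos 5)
Your reduction has a real problem of scope: in this paper $\br$ and $\gs$ are not two independent problems of which one happens to generalize the other --- the paper explicitly fixes the convention ``we use $\br$ when $k=1$ and $\gs$ otherwise.'' Under that convention, the lemma you are asked to prove is about the genuinely multi-source problem ($k\ge 2$, with the compression parameter $c$ active), and your identity map sends a $\br$ instance to an instance with $|S|=1$, $c=1$, which by the paper's own naming is again $\br$, not $\gs$. So your argument only restates the previous lemma; your closing ``bonus'' claim that $\gs$ is hard even when $|S|=1$, $c=1$ is exactly Lemma~\ref{proof:SingleBroadcastNpHard}, not a strengthening. The substantive content of the paper's lemma --- which your proof never touches --- is that having many sources and the ability to aggregate up to $c$ rumors per batch does not make the decision problem easy. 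This matters because aggregation changes the message-counting structure: a priori, the freedom to compress could collapse the set-cover gadget. The paper handles it by a new construction: replace the single source by a clique of $k$ sources, each wired to all second-tier (set) nodes, and set the budget to $m = k\lceil\frac{k}{c}\rceil + k$; one argues that the $k$ rumors must first be exchanged without aggregation (the $+k$ term), after which each transmitting intermediate node needs $\lceil\frac{k}{c}\rceil$ batches, so the choice of which intermediate nodes transmit again encodes a set cover. Your proof contains no analogue of this budget analysis, and without it you have not shown hardness for any instance with $k\ge 2$ or $c>1$.

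Two smaller points. First, if one did read the lemma as being about the general parameterized family (any $k$), hardness-by-restriction would be logically valid, but it would still be the weakest possible reading and plainly not what the appendix is for, given that the paper invokes this lemma to justify ``message efficient gossip is also NP-hard,'' i.e., the many-source regime. Second, your NP-membership argument needs the cap you gesture at to be made explicit: $m$ is part of the input and may be written in binary, so you should note that any instance with $m \ge n\lceil\frac{k}{c}\rceil$ is trivially a yes-instance (flooding along a spanning structure suffices), after which a schedule of at most $\min\bigl(m,\, n\lceil\frac{k}{c}\rceil\bigr)$ transmissions is a polynomial-size certificate. With the clique construction added and the budget $k\lceil\frac{k}{c}\rceil + k$ analyzed, your write-up would align with the paper's proof; as it stands, the key idea is missing.
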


\begin{proof}
We use similar construction as in Lemma \ref{proof:SingleBroadcastNpHard}, but instead of one transmitting node, we create a clique of size $k$
in the first tier
and connect each node in the clique to all second tier nodes (see Figure \ref{fig:reductionGossip}).
We then ask whether there is a solution to $\gs$ with $m=k \cdot \lceil \frac{k}{c} \rceil +k$ messages.
Clearly, in any optimal algorithm, all $k$ rumors must be delivered to the intermediate nodes for future distribution and it is not possible to aggregate them.
Once we have all $k$ rumors in the intermediate nodes, once we can find a schedule such that only $k$ intermediate nodes transmit, we can use the same arguments as in
the proof of Lemma \ref{proof:SingleBroadcastNpHard} and show that SET-COVER $\leq_{p} \gs$.
\end{proof}
\begin{figure}
\center
\begin{subfigure}[b]{0.3\textwidth}
\scalebox{0.8}
{
\begin{pspicture}(0,-1.85)(5.46,1.85)
\definecolor{color556}{rgb}{1.0,0.0,0.2}
\definecolor{color582}{rgb}{0.8,0.0,0.4}
\definecolor{color659}{rgb}{0.2,0.2,1.0}
\psellipse[linewidth=0.04,dimen=outer](0.82,-0.25)(0.54,0.54)
\psellipse[linewidth=0.04,dimen=outer](2.48,-0.23)(0.54,0.54)
\psellipse[linewidth=0.04,dimen=outer](4.6,-0.31)(0.54,0.54)
\usefont{T1}{ptm}{m}{n}
\rput(0.80140626,-0.26){$e_1$}
\usefont{T1}{ptm}{m}{n}
\rput(2.4214063,-0.24){$e_2$}
\usefont{T1}{ptm}{m}{n}
\rput(4.501406,-0.26){$e_3$}
\psframe[linewidth=0.04,linecolor=color556,linestyle=dashed,dash=0.16cm 0.16cm,dimen=outer](5.46,0.63)(1.6,-1.85)
\psframe[linewidth=0.04,linecolor=color582,linestyle=dashed,dash=0.16cm 0.16cm,dimen=outer](3.5,1.85)(0.0,-1.17)
\usefont{T1}{ptm}{m}{n}
\rput(4.4914064,-1.5){\color{color556}$U_3$}
\usefont{T1}{ptm}{m}{n}
\rput(0.61140627,1.3){\color{color582}$U_1$}
\psframe[linewidth=0.04,linecolor=color659,linestyle=dashed,dash=0.16cm 0.16cm,dimen=outer](3.98,1.41)(1.78,-0.95)
\usefont{T1}{ptm}{m}{n}
\rput(2.5514061,1.0){\color{color659}$U_2$}
\end{pspicture}
}

\caption{SET-COVER instance.}					
\label{fig:reductionInput}
\end{subfigure}%
\qquad \qquad 
\begin{subfigure}[b]{0.3\textwidth}
\centering
\scalebox{0.8} 
{
\begin{pspicture}(0,-2.14)(3.561875,2.14)
\psellipse[linewidth=0.04,dimen=outer](0.56,-0.02)(0.44,0.5)
\psellipse[linewidth=0.04,dimen=outer](1.78,1.64)(0.44,0.5)
\psellipse[linewidth=0.04,dimen=outer](3.04,-0.04)(0.44,0.5)
\psellipse[linewidth=0.04,dimen=outer](1.78,-0.02)(0.44,0.5)
\psellipse[linewidth=0.04,dimen=outer](0.44,-1.6)(0.44,0.5)
\psellipse[linewidth=0.04,dimen=outer](3.1,-1.62)(0.44,0.5)
\psellipse[linewidth=0.04,dimen=outer](1.8,-1.64)(0.44,0.5)
\psline[linewidth=0.04cm,arrowsize=0.05291667cm 2.0,arrowlength=1.4,arrowinset=0.4]{<-}(0.6,0.44)(1.96,1.22)
\psline[linewidth=0.04cm,arrowsize=0.05291667cm 2.0,arrowlength=1.4,arrowinset=0.4]{<-}(1.76,0.46)(1.76,1.16)
\psline[linewidth=0.04cm,arrowsize=0.05291667cm 2.0,arrowlength=1.4,arrowinset=0.4]{<-}(2.96,0.38)(1.78,1.14)
\psline[linewidth=0.04cm,arrowsize=0.05291667cm 2.0,arrowlength=1.4,arrowinset=0.4]{<-}(3.1,-1.14)(3.08,-0.54)
\psline[linewidth=0.04cm,arrowsize=0.05291667cm 2.0,arrowlength=1.4,arrowinset=0.4]{<-}(1.8,-1.16)(1.78,-0.52)
\psline[linewidth=0.04cm,arrowsize=0.05291667cm 2.0,arrowlength=1.4,arrowinset=0.4]{<-}(0.48,-1.16)(0.46,-0.52)
\usefont{T1}{ptm}{m}{n}
\rput(0.51140624,-0.03){$u_1$}
\usefont{T1}{ptm}{m}{n}
\rput(1.7714063,0.01){$u_2$}
\usefont{T1}{ptm}{m}{n}
\rput(3.0514061,-0.01){$u_3$}
\psline[linewidth=0.04cm,arrowsize=0.05291667cm 2.0,arrowlength=1.4,arrowinset=0.4]{<-}(1.8,-1.16)(3.1,-0.5)
\psline[linewidth=0.04cm,arrowsize=0.05291667cm 2.0,arrowlength=1.4,arrowinset=0.4]{<-}(1.78,-1.12)(0.46,-0.5)
\usefont{T1}{ptm}{m}{n}
\rput(0.42140624,-1.59){$e_1$}
\usefont{T1}{ptm}{m}{n}
\rput(1.8014063,-1.63){$e_2$}
\usefont{T1}{ptm}{m}{n}
\rput(3.0814064,-1.61){$e_3$}
\usefont{T1}{ptm}{m}{n}
\rput(1.7314062,1.63){$s$}
\end{pspicture}
}

\caption{The $\br$ instance.}
\label{fig:reductionOutput}
\end{subfigure}
\qquad \qquad 
\begin{subfigure}[b]{0.5\textwidth}
\centering
\scalebox{0.8} 
{
\begin{pspicture}(0,-2.44)(3.561875,2.46)
\psellipse[linewidth=0.04,dimen=outer](0.56,-0.32)(0.44,0.5)
\psellipse[linewidth=0.04,dimen=outer](1.78,1.34)(0.44,0.5)
\psellipse[linewidth=0.04,dimen=outer](3.04,-0.34)(0.44,0.5)
\psellipse[linewidth=0.04,dimen=outer](1.78,-0.32)(0.44,0.5)
\psellipse[linewidth=0.04,dimen=outer](0.44,-1.9)(0.44,0.5)
\psellipse[linewidth=0.04,dimen=outer](3.1,-1.92)(0.44,0.5)
\psellipse[linewidth=0.04,dimen=outer](1.8,-1.94)(0.44,0.5)
\psline[linewidth=0.04cm,arrowsize=0.05291667cm 2.0,arrowlength=1.4,arrowinset=0.4]{<-}(0.6,0.14)(1.96,0.92)
\psline[linewidth=0.04cm,arrowsize=0.05291667cm 2.0,arrowlength=1.4,arrowinset=0.4]{<-}(1.76,0.16)(1.76,0.86)
\psline[linewidth=0.04cm,arrowsize=0.05291667cm 2.0,arrowlength=1.4,arrowinset=0.4]{<-}(2.96,0.08)(1.78,0.84)
\psline[linewidth=0.04cm,arrowsize=0.05291667cm 2.0,arrowlength=1.4,arrowinset=0.4]{<-}(3.1,-1.44)(3.08,-0.84)
\psline[linewidth=0.04cm,arrowsize=0.05291667cm 2.0,arrowlength=1.4,arrowinset=0.4]{<-}(1.8,-1.46)(1.78,-0.82)
\psline[linewidth=0.04cm,arrowsize=0.05291667cm 2.0,arrowlength=1.4,arrowinset=0.4]{<-}(0.48,-1.46)(0.46,-0.82)
\usefont{T1}{ptm}{m}{n}
\rput(0.51140624,-0.33){$u_1$}
\usefont{T1}{ptm}{m}{n}
\rput(1.7714063,-0.29){$u_2$}
\usefont{T1}{ptm}{m}{n}
\rput(3.0514061,-0.31){$u_2$}
\psline[linewidth=0.04cm,arrowsize=0.05291667cm 2.0,arrowlength=1.4,arrowinset=0.4]{<-}(1.8,-1.46)(3.1,-0.8)
\psline[linewidth=0.04cm,arrowsize=0.05291667cm 2.0,arrowlength=1.4,arrowinset=0.4]{<-}(1.82,-1.46)(0.46,-0.8)
\usefont{T1}{ptm}{m}{n}
\rput(0.42140624,-1.89){$e_1$}
\usefont{T1}{ptm}{m}{n}
\rput(1.8014063,-1.93){$e_2$}
\usefont{T1}{ptm}{m}{n}
\rput(3.0814064,-1.91){$e_3$}
\usefont{T1}{ptm}{m}{n}
\rput(1.7314062,1.33){$s_2$}
\psellipse[linewidth=0.04,dimen=outer](0.5,1.32)(0.44,0.5)
\usefont{T1}{ptm}{m}{n}
\rput(0.45140624,1.31){$s_1$}
\psellipse[linewidth=0.04,dimen=outer](3.08,1.32)(0.44,0.5)
\usefont{T1}{ptm}{m}{n}
\rput(3.0314062,1.31){$s_3$}
\psline[linewidth=0.04cm,arrowsize=0.05291667cm 2.0,arrowlength=1.4,arrowinset=0.4]{<-}(0.5,0.1)(0.5,0.82)
\psline[linewidth=0.04cm,arrowsize=0.05291667cm 2.0,arrowlength=1.4,arrowinset=0.4]{<-}(2.92,0.08)(0.46,0.78)
\psline[linewidth=0.04cm,arrowsize=0.05291667cm 2.0,arrowlength=1.4,arrowinset=0.4]{<-}(3.02,0.1)(3.02,0.82)
\psline[linewidth=0.04cm,arrowsize=0.05291667cm 2.0,arrowlength=1.4,arrowinset=0.4]{<-}(0.68,0.18)(3.12,0.84)
\psline[linewidth=0.04cm,arrowsize=0.05291667cm 2.0,arrowlength=1.4,arrowinset=0.4]{<-}(1.74,0.14)(3.1,0.84)
\psline[linewidth=0.04cm,arrowsize=0.05291667cm 2.0,arrowlength=1.4,arrowinset=0.4]{<-}(1.82,0.18)(0.5,0.76)
\psline[linewidth=0.04cm,arrowsize=0.05291667cm 2.0,arrowlength=1.4,arrowinset=0.4]{<->}(2.2,1.28)(2.68,1.3)
\psline[linewidth=0.04cm,arrowsize=0.05291667cm 2.0,arrowlength=1.4,arrowinset=0.4]{<->}(0.88,1.3)(1.42,1.32)
\psarc[linewidth=0.04,arrowsize=0.05291667cm 2.0,arrowlength=1.4,arrowinset=0.4]{<->}(1.75,0.99){1.45}{30.67728}{147.09476}
\end{pspicture}
}
\caption{The $\gs$ instance where $|S|=3$.}
\label{fig:reductionGossip}
\end{subfigure}

\caption{The reduction from SET-COVER to $\br$ and $\gs$.}
\label{fig:reduction}					
\end{figure} 
\end{appendices} 
\end{document}